\DeclarePairedDelimiter{\ceil}{\lceil}{\rceil}
\newcommand{\deltahat}{\hat{\delta}}
\newcommand{\prs}{\mathbf{P}}
\newcommand{\prstuple}{\langle \Sigma,P,P_c,R\rangle}
\newcommand{\defined}{\mathrm{def}}
\newcommand{\tftos}[1]{S_{#1}} 
\newcommand{\repwith}{\leftarrow} 
\newcommand{\rep}[2]{_{[#1\repwith #2]}}
\newcommand{\dfa}{A}
\newcommand{\complete}[1]{#1_C}
\newcommand{\pdfa}{p}
\newcommand{\pdfai}[1]{\pdfa^{#1}}
\newcommand{\pinit}{\pdfai{I}}
\newcommand{\dfai}[1]{\dfa_{#1}}
\newcommand{\rhoi}[1]{\rho^{(#1)}}
\newcommand{\gi}{\dfai{i}}
\newcommand{\gp}{\dfai{i+1}}
\newcommand{\compose}{\circ}
\newcommand{\dfatuple}{\langle\Sigma,q_0,Q,F,\delta\rangle}
\newcommand{\pdfatuple}{\langle\Sigma,q_0,Q,q_X,\delta\rangle}
\newcommand{\tree}{{\cal{T}}}
\newcommand{\treei}[1]{{\cal{T}}^{(#1)}}
\newcommand{\phat}{{\hat{p}}}
\newcommand{\phati}[1]{\hat{p}^{#1}}
\newcommand{\zphat}{{\hat{Z_p}}}
\newcommand{\initset}{\mathcal{I}}
\newcommand{\initseti}[1]{\initset_{#1}}
\newcommand{\pairsymbol}{\mathcal{D}}
\newcommand{\pairsymboli}[1]{\pairsymbol_{#1}}
\newcommand{\explicitpairi}[1]{\langle \dfai{#1}, \initseti{#1} \rangle}
\newcommand{\rulearrow}{\twoheadrightarrow} 
\newcommand{\pairname}{{enabled DFA}}
\newcommand{\pdfasdfa}{\dfa^{\pdfa}}
\newcommand{\pdfasdfai}[1]{\dfa^{\pdfai{#1}}}
\newcommand{\imap}{\mathcal{I}}
\newcommand{\imapi}[1]{\imap_{#1}}
\newcommand{\explicitpairii}[1]{\langle \dfai{#1}, \jmapi{#1} \rangle}
\newcommand{\shorteq}{
  \settowidth{\@tempdima}{-}
  \resizebox{\@tempdima}{\height}{=}%
}
\newcommand{\connect}{{\mathlarger\circ} \! \mathrm{\shorteq}\ }
\newcommand{\concat}{{\cdot}}
\newcommand{\lstar}{{$L^*$}~}
\newcommand{\patternpair}{\langle P, P_c \rangle }
\newcommand{\joinf}{\mathrm{join}}
\newcommand{\nth}{\textsuperscript{th}}
   \def\@citecolor{blue}%
   \def\@urlcolor{blue}%
   \def\@linkcolor{blue}%
\def\orcidID#1{\smash{\href{http://orcid.org/#1}{\protect\raisebox{-1.25pt}{\protect\includegraphics{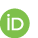}}}}}
\begin{document}
\title{Synthesizing Context-free Grammars from Recurrent Neural Networks} 
\subtitle{Extended Version\thanks{This is an extended version of a paper that will appear in the
27th International Conf on Tools and Algorithms for the Construction and Analysis of Systems (TACAS 2021)}}

\author{Daniel M. Yellin\inst{1}\orcidID{0000-0001-7214-5610} \and
Gail Weiss\inst{2}\orcidID{0000-0003-0762-9090}}

\authorrunning{D.M. Yellin G. Weiss}
\institute{IBM, Givatayim, Israel \\ \email{dannyyellin@gmail.com} \and
Technion, Haifa, Israel \\ 
\email{sgailw@cs.technion.ac.il}\\}
\maketitle
\setcounter{footnote}{0}
\begin{abstract}
We present an algorithm for extracting a subclass of the context free grammars (CFGs) from a trained recurrent neural network (RNN).
We develop a new framework, \emph{pattern rule sets} (PRSs), which describe sequences of deterministic finite automata (DFAs) that approximate a non-regular language. 
We present an algorithm for recovering the PRS behind a sequence of such automata, and apply it to the sequences of automata extracted from trained RNNs using the \lstar algorithm.
We then show how the PRS may converted into a CFG, enabling a familiar and useful presentation of the learned language. 

Extracting the learned language of an RNN is important to facilitate understanding of the RNN and to verify its correctness.   
Furthermore, the extracted CFG can augment the RNN in classifying correct sentences, as the RNN's predictive accuracy decreases when the recursion depth and distance between matching delimiters of its input sequences increases.

\keywords{Model Extraction  \and Learning Context Free Grammars \and Finite State Machines \and Recurrent Neural Networks}
\end{abstract}

\section{Introduction}

Recurrent Neural Networks (RNNs) are a class of neural networks adapted to sequential input, enjoying wide use in a variety of sequence processing tasks.
Their internal process is opaque, prompting several works into extracting interpretable rules from them.
Existing works focus on the extraction of deterministic or weighted finite automata (DFAs and WFAs) from trained RNNs \cite{omlin-giles-96,NNExtractionFuzzyClustering,WeissGoldbergYahav2018,WFAfromRNNSpectral}.

However, DFAs are insufficient to fully capture the behavior of RNNs, which are known to be theoretically Turing-complete \cite{siegelmann95}, and for which there exist architecture variants such as LSTMs \cite{LSTM} and features such as stacks \cite{Das-stack-92,Sun1997} or attention \cite{bahdanau-attention} increasing their practical power.
Several recent investigations 
explore the ability of different RNN architectures to learn Dyck, counter, and other non-regular languages \cite{sennhauser-berwick-2018-evaluating,Bernardy2018,YuVukuhn2019,skachkova-etal-2018-closing}, with mixed results. While the data indicates that RNNs can generalize and achieve high accuracy, they do not learn 
hierarchical rules, and generalization deteriorates as the distance or depth between matching delimiters becomes dramatically larger\cite{sennhauser-berwick-2018-evaluating,Bernardy2018,YuVukuhn2019}.   
Sennhauser and Berwick conjecture that 
``what the LSTM has in fact acquired is sequential statistical approximation to this 
solution" instead of ``the `perfect' rule-based solution" \cite{sennhauser-berwick-2018-evaluating}.  Similarly, Yu et. al. conclude that ``the RNNs 
can not truly model CFGs, even when powered by the attention mechanism".

\paragraph{Goal of this paper}
We wish to extract a CFG from a trained RNN.
Our motivation is two-fold: first, extracting a CFG from the RNN 
is important to facilitate understanding of the RNN and to verify its correctness.
Second, the learned CFG may be used to augment or generalise the rules learned by the RNN, whose own predictive ability decreases as the depth of nested structures and distance between matching constructs in the input sequences increases \cite{Bernardy2018,sennhauser-berwick-2018-evaluating,YuVukuhn2019}.
Our technique can synthesize the CFG based upon training data with relatively short distance and small depth.  As pointed out in \cite{hewitt-etal-2020-rnns}, 
a fixed precision RNN can only learn a language of fixed depth strings (in contrast to an idealized infinite precision RNN that can recognize any Dyck language\cite{Korsky2019}).  
Our goal is to find the CFG that not only
explains the finite language learnt by the RNN, but generalizes it to strings of unbounded depth and distance.  

\paragraph{Our approach}
Our method builds on the DFA extraction work of Weiss et al. \cite{WeissGoldbergYahav2018}, which uses the \lstar algorithm \cite{Lstar} to learn the DFA of a given RNN.
The \lstar algorithm operates by generating a
\emph{sequence} of DFAs, each one a hypothesis for the target language, and interacting with a teacher, in our case the RNN, to improve them.
Our main insight is that we can view these DFAs as increasingly accurate approximations of the target CFL.  
We assume that each hypothesis improves on its predecessor by applying an unknown rule that recursively increases the distance and embedded depth of sentences accepted by the underlying CFL.  
In this light, synthesizing the CFG responsible for the language learnt by the RNN
becomes the problem of recovering these rules.
A significant issue
we must also address is that the DFAs produced are often
inexact or not as we expect, either due to the failure of the RNN to accurately learn the language, or as an artifact of the \lstar algorithm.  

\begin{figure*}[t]
\includegraphics[scale =.4,trim= 30mm 80mm 20mm 50mm,clip=true]{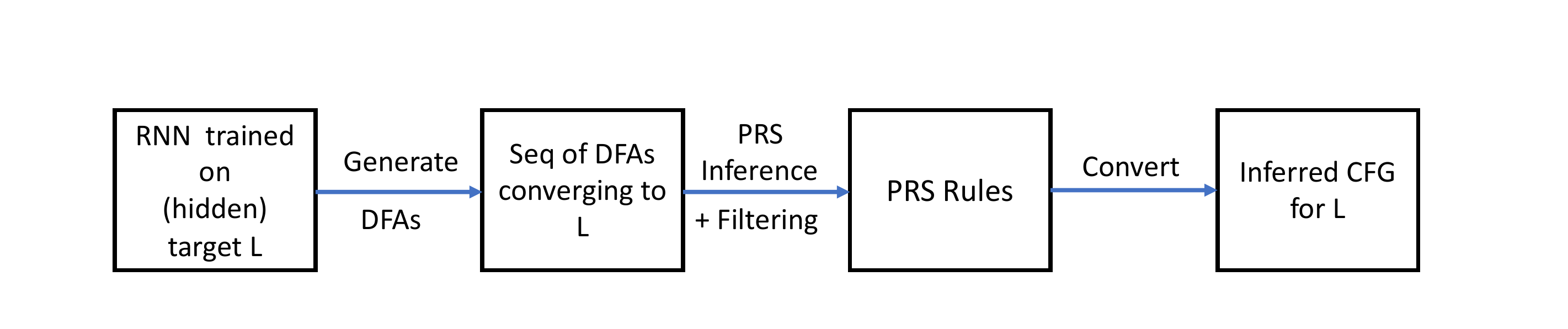}
\caption{Overview of steps in algorithm to synthesize the hidden language $L$}
\label{fig:overview}
\end{figure*}

We propose the framework of \emph{pattern rule sets} (PRSs) for describing such rule applications, and present an algorithm for recovering a PRS from a sequence of DFAs. We also provide a method for converting a PRS to a CFG, translating our extracted rules into familiar territory. We test our method on RNNs trained on several PRS languages.

Pattern rule sets are expressive enough to cover several variants of the Dyck languages, which are prototypical CFLs: 
the Chomsky–Schützenberger representation theorem shows that any context-free language can be expressed as a homomorphic image
of a Dyck language intersected with a regular language\cite{Kozen1977}.

To the best of our knowledge, this is the first work on synthesizing a CFG from a general RNN  \footnote{Though some works extract push-down automata \cite{Sun1997,Das-stack-92} from RNNs  with an external stack (Sec. \ref{Sect:related}), they do not apply to plain RNNs.
}.

\paragraph{Contributions}
The main contributions of this paper are:
\begin{itemize}
    \item \emph{Pattern Rule Sets} (PRSs), a framework for describing a sequence of DFAs approximating a CFL.
    \item An algorithm for recovering the PRS generating a sequence of DFAs, that may also be applied to noisy DFAs elicited from an RNN using \lstar.
    \item An algorithm converting a PRS to a CFG.
    \item An implementation of our technique, and an evaluation of its success on recovering various CFLs from trained RNNs. \footnote{The implementation for this paper, and a link to all trained RNNs, is available at {\tt https://github.com/tech-srl/RNN\_to\_PRS\_CFG}.}

\end{itemize}
The overall steps in our technique are given in Figure \ref{fig:overview}.  
The rest of this paper is as follows.  Section \ref{Sec:Background} provides basic definitions used in the paper, and
Section \ref{Sect:Patterns} introduces {\it Patterns}, a restricted form of DFAs.  Section \ref{sect:PRS} defines
{\em Pattern Rule Sets (PRS)}, the main construct of our research. Section \ref{Sect:rule-inference} gives an 
algorithm to recover a PRS from a sequence of DFAs, even in the presence of noise, and Section
\ref{Sect:expressibility} gives an algorithm to convert a PRS into a CFG.   Section \ref{sect:experiments} presents our experimental
results, Section \ref{Sect:related} discusses related research and Section \ref{Sect:future} outlines directions for future research. 
Appendices \ref{sect:appendixObs},\ref{Sect:appendixCorrect} and \ref{Sect:AppendixCFG} provide proofs of the correctness of the algorithms given 
in the paper, as well results relating to the expressibility of a PRS.

\section{Definitions and Notations}\label{Sec:Background}

\subsection{Deterministic Finite Automata}
\begin{definition}[Deterministic Finite Automata]
A deterministic finite automaton (DFA) over an alphabet $\Sigma$ is a 5-tuple $\langle\Sigma,q_0,Q,F,\delta \rangle$ such that $Q$ is a finite set of states, $q_0\in Q$ is the initial state, $F\subseteq Q$ is a set of final (accepting) states and $\delta:Q\times\Sigma\rightarrow Q$ is a (possibly partial) transition function. 
\end{definition}

Unless stated otherwise, we assume each DFA's states are unique to itself, i.e., for any two DFAs $A,B$ -- including two instances of the same DFA --  $Q_A\cap Q_B=\emptyset$. A DFA $\dfa$ is said to be {\em complete} if $\delta$ is complete, i.e., the value $\delta(q,\sigma)$ is defined for every $q,\sigma\in Q\times \Sigma$. Otherwise, it is \emph{incomplete}.

We define the extended transition function
$\deltahat:Q\times\Sigma^*\rightarrow Q$  and the language $L(\dfa)$ accepted by $\dfa$ in the typical fashion. We also associate a language with intermediate states of $\dfa$: $L(\dfa,q_1,q_2)\triangleq \{w\in\Sigma^*\ |\ \deltahat(q_1,w)=q_2\}$. 
The states from which no sequence $w\in\Sigma^*$ is accepted are known as the \emph{sink reject states}.

\begin{definition}
The \emph{sink reject states} of a DFA $\dfa=\dfatuple$ are the maximal set $Q_R\subseteq Q$ satisfying: $Q_R\cap F=\emptyset$, and for every $q\in Q_R$ and $\sigma\in\Sigma$, either $\delta(q,\sigma)\in Q_R$ or $\delta(q,\sigma)$ is not defined.
\end{definition}

\emph{Incomplete} DFAs are partial representations of complete DFAs, where every unspecified transition 
is shorthand for a transition to a sink reject state.
All definitions for complete DFAs are extended to incomplete DFAs $\dfa$ by considering their \emph{completion}: the DFA $\complete{\dfa}$ obtained by connecting a (possibly new) sink reject state to all its missing transitions.

\begin{definition} [Defined Tokens]
Let $\dfa=\dfatuple$ be a complete DFA with sink reject states $Q_R$. 
For every $q\in Q$, its
\emph{defined tokens} are
$\defined(\dfa,q)\triangleq\{\sigma\in\Sigma\ |\ \delta(q,\sigma) \notin Q_R\}$. When the DFA $\dfa$ is clear from context, we write $\defined(q)$.
\end{definition}

We now introduce terminology that will help us discuss merging automata states.

\begin{definition} [Set Representation of $\delta$]
A (possibly partial) transition function $\delta:Q\times\Sigma\rightarrow Q$ may be equivalently defined as the set $\tftos{\delta}=\{(q,\sigma,q')\ |\ \delta(q,\sigma)=q'\}$. We use $\delta$ and $S_\delta$ interchangeably.
\end{definition}

\begin{definition} [Replacing a State]
For a transition function $\delta:Q\times\Sigma\rightarrow Q$, state $q\in Q$, and new state $q_n\notin Q$, we denote by
$\delta\rep{q}{q_n}:
Q'\times\Sigma\rightarrow Q'$ the transition function over $Q'=(Q\setminus\{q\})\cup\{q_n\}$
and  $\Sigma$ that is identical to $\delta$ except that it redirects all transitions 
into or out of $q$ to be into or out of $q_n$. 
\end{definition}

\subsection{Dyck Languages}
\label{Sect:Dyck}
A Dyck language \emph{of order $N$} is expressed by the grammar
\begin{tt}
D ::= $\varepsilon$  | L$_i$ D R$_i$ | D D 
\end{tt} with start symbol {\tt D},
where for each $1\leq i\leq N$,  L$_i$ and R$_i$ are matching left and right delimiters. 
A common methodology for measuring the complexity of a Dyck word is to measure its maximum \emph{distance} (number of characters) and \emph{embedded depth} (maximum number of unclosed delimiters) between matching delimiters~\cite{sennhauser-berwick-2018-evaluating}.

While  L$_i$ and R$_i$ are single characters in a Dyck language, we generalize and refer to {\em Regular Expression Dyck  (RE-Dyck)} languages  as 
languages expressed by the same CFG,
except that each {\tt L}$_i$ and each {\tt R}$_i$ 
derive some regular expression.

 \paragraph{Regular Expressions:} We present regular expressions as is standard, 
 for example:
 $\{ a | b \} \concat c$ refers to the language consisting of one of $a$ or $b$, followed by $c$.
\section{Patterns}
\label{Sect:Patterns}
Patterns are DFAs with a single {\em exit} state $q_X$ in place of a set of final states,
and with no cycles on their initial or exit states unless $q_0=q_X$.  In this paper we express patterns in incomplete representation, i.e., they have no explicit sink-reject states.

\begin{definition}[Patterns]
A \emph{pattern} $\pdfa=\pdfatuple$ is a DFA $\pdfasdfa=\langle \Sigma, q_0, Q, \{q_X\}, \delta \rangle$ 
satisfying: $L(A^p)\neq \emptyset$, and either $q_0=q_X$, or $\defined(q_X)=\emptyset$ and $L(\dfa,q_0,q_0)=\{\varepsilon\}$.  
If $q_0 = q_X$ then $p$ is called \emph{circular}, otherwise, it is \emph{non-circular}.
\label{Defn:pattern}
\end{definition}
Note that our definition does not rule out a cycle in the middle of an {\em non-circular} pattern but only one that traverses the initial or final states.

All the definitions for DFAs 
apply to patterns through $\pdfasdfa$. 
We denote each pattern $p$'s language
$L_p\triangleq L(p)$,
and if it is marked by some superscript $i$, we refer to all of its components with superscript $i$: $p^i=\langle \Sigma, q_0^i,Q^i,q_X^i,\delta^i\rangle$.

\subsection{Pattern Composition}
We can compose two non-circular patterns $\pdfa^1,\pdfa^2$ 
by merging the exit state of $\pdfa^1$ with the initial state of $\pdfa^2$,
creating a new
pattern $\pdfa^3$ satisfying 
$L_{\pdfa^3}=L_{\pdfa^1}\concat L_{\pdfa^2}$.

\begin{definition}[Serial Composition]
Let $p^1,p^2$
be two 
non-circular
patterns.
Their {\em serial composite} is the pattern $p^1\compose p^2 = \langle \Sigma, q_0^1,Q,q_X^2,\delta\rangle$ in which
$Q=Q^1\cup Q^2\setminus\{q_X^1\}$ and
$\delta = \delta^1\rep{q_X^1}{q_0^2} \cup \delta^2$.  
We call $q_0^2$ the {\em join state} of this operation.
\end{definition}

If we additionally merge the exit state of $\pdfa_2$ with the initial state of $\pdfa_1$, 
we obtain a circular pattern $\pdfa$ which we call the \emph{circular composition} of $p_1$ and $p_2$. This composition satisfies $L_{\pdfa} = \{L_{\pdfa_1}\concat L_{\pdfa_2}\}^*$.

\begin{definition}[Circular Composition]
Let $p^1,p^2$
be two non-circular patterns.
Their {\em circular composite} is the circular pattern $p_1\compose_c p_2 = \langle \Sigma, q_0^1,Q,q_0^1,\delta\rangle$ in which
$Q=Q^1\cup Q^2\setminus\{q_X^1,q_X^2\}$ and $\delta = \delta^1\rep{q_X^1}{q_0^2} \cup \delta^2\rep{q_X^2}{q_0^1}$.   
We call $q_0^2$ the {\em join state} of this operation.
\end{definition}

Figure \ref{fig:composition-examples} shows 3 examples of serial and circular compositions of patterns.
\begin{figure*}
\includegraphics[scale = .33,trim = -30mm 11mm 0mm 12mm]{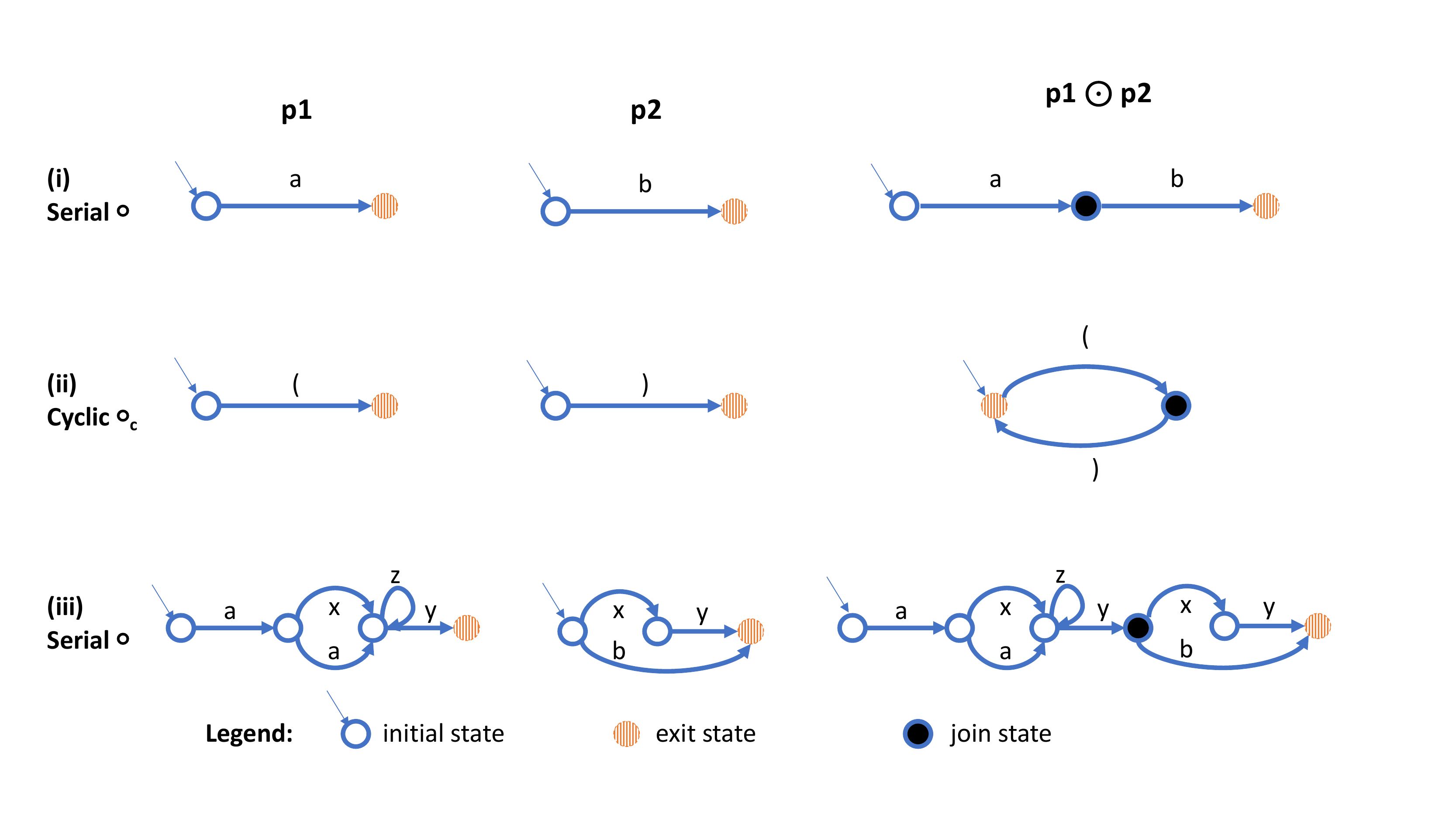}
\caption{Examples of the composition operator}
\label{fig:composition-examples}
\end{figure*}

Patterns do not 
carry information about whether or not they have been composed from other patterns. 
We maintain such information using \emph{pattern pairs}.

\begin{definition}[Pattern Pair]
A \emph{pattern pair} is a pair $\langle P,P_c\rangle$ of pattern sets, such that $P_c\subset P$  and for every $p\in P_c$ there exists exactly one pair $p_1,p_2\in P$ satisfying $p = p_1\odot p_2$ for some $\odot\in\{\compose,\compose_c\}$.
We refer to the patterns $p\in P_c$ as the \emph{composite patterns} of $\langle P,P_c\rangle$, and to the rest as its \emph{base patterns}.
\end{definition}

Every instance $\phat$ of a pattern $p$ in a DFA $\dfa$ is uniquely defined by $p$, $\dfa$, and $\phat$'s initial state in $\dfa$. If $p$ is a composite pattern with respect to some pattern pair $\patternpair$, 
the join state of its composition within $\dfa$ is also uniquely defined.

\begin{definition}[Pattern Instances]
Let $\dfa=~\langle \Sigma, q^\dfa_0, Q^\dfa, F, \delta^\dfa\rangle$ be a DFA, $p=\pdfatuple$ be a pattern, and  $\phat=\langle \Sigma, q'_0, Q', q'_X, \delta'\rangle$ be a pattern `inside' $\dfa$, i.e., $Q'\subseteq Q^\dfa$ and $\delta'\subseteq \delta^\dfa$. We say that $\phat$ is an \emph{instance of $p$ in $\dfa$} if 
$\phat$ is isomorphic to $p$.
\end{definition}

A pattern instance $\phat$ in a DFA $\dfa$ is uniquely determined by its structure and initial state: $(p,q)$.

\begin{definition} For every pattern pair $\patternpair$ we define the function
$\joinf$ as follows: for each composite pattern 
$p\in P_c$,
DFA $\dfa$, and initial state $q$ of an instance $\phat$ of $p$ in $\dfa$,
$\joinf(p,q,\dfa)$ returns the join state of $\phat$ with respect to its composition in $\patternpair$. 
\end{definition}

\renewcommand{\explicitpairii}[1]{\langle \dfai{#1}, \imapi{#1} \rangle}

\section{Pattern Rule Sets}
\label{sect:PRS}
For any infinite sequence $S=\dfa_1,\dfa_2,...$ of DFAs satisfying $L(\dfa_i)\subset L(\dfa_{i+1})$, for all $i$,
we define the language of $S$ as the union of the languages of all these DFAs: $L(S)=\cup_i L(\dfa_i)$. Such sequences may be used to express CFLs
such as the language $L=\{\mathrm{a}^n\mathrm{b}^n\ |\ n\in\mathbb{N}\}$ and the Dyck language of order N.

In this work we take a finite sequence $\dfa_1,\dfa_2,...,\dfa_n$ of DFAs, and assume it is a (possibly noisy) finite prefix of an infinite sequence of approximations for a language, as above.
We attempt to reconstruct the language by guessing how the sequence may continue. 
To allow such generalization, we must make assumptions about how the sequence is generated.
For this we introduce \emph{pattern rule sets}.

Pattern rule sets (PRSs) create sequences of DFAs with a single accepting state. 
Each PRS is built around a pattern pair $\patternpair$, and each rule application involves the connection of a new pattern instance to the current DFA $\gi$, at the join state of a composite-pattern inserted whole at some earlier point in the DFA's creation. 
In order to define where a pattern can be inserted into a DFA, we introduce an \emph{enabled instance} set $\imap$.

\begin{definition}
An \emph{\pairname} over a pattern pair $\patternpair$
is a tuple $\langle \dfa,\imap \rangle$ 
such that 
$\dfa=\langle \Sigma, q_0, Q, F, \delta\rangle$
is a DFA and $\imap \subseteq P_c\times Q$ marks
\emph{enabled 
instances} of 
composite 
patterns in $\dfa$.
\end{definition}

Intuitively, for every \pairname $\langle \dfa,\imap \rangle$ and $(p,q)\in \imap$, we know: (i) there is an instance of pattern $p$ in $\dfa$
starting at state $q$, and (ii) this instance is \emph{enabled}; i.e., we may connect new pattern instances to its join state 
$\joinf (p,q,\dfa)$.

We now formally define pattern rule sets and how they are applied to create 
{\pairname}s, 
and so sequences of DFAs.

\begin{definition}
A PRS $\prs$ is a tuple
$\langle \Sigma, P, P_c, R\rangle $  where $\patternpair$ is a pattern pair over the alphabet $\Sigma$ and $R$ is a set of \emph{rules}. Each rule has one of the following forms, for some $p,p^1, p^2, p^3, \pinit \in P$, with $p^1$ and $p^2$ non-circular:
\begin{enumerate} [label=\textnormal{(\arabic*)}]
\item $\perp \rulearrow \pinit$ \label{rule-start} 
\item $p \rulearrow_c (p^1 \odot p^2) \connect p^3$, 
where $p= p^1 \odot p^2$ for $\odot \in \{\compose,\compose_c\}$, and $p^3$ is circular \label{rule-composite-circular}
\item $p \rulearrow_s (p^1 \compose p^2) \connect p^3$, where $p= p^1 \compose p^2$ and $p^3$ is non-circular \label{rule-composite} \label{rule-composite-serial} 

\end{enumerate}
\end{definition} 

A PRS is used to derive sequences of \pairname s as follows:
first, a rule of type \ref{rule-start} is used to create an initial \pairname~$\pairsymboli{1}=\explicitpairii{1}$.
Then, for any $\explicitpairii{i}$, each of the rule types define options to graft new pattern instances onto states in $\dfai{i}$, with $\imapi{i}$ determining which states are eligible to be expanded in this way.
The first DFA is simply the $\pinit$ from a rule of type \ref{rule-start}.  If $\pinit$ is composite, then it is also enabled.

\begin{definition}[Initial Composition]
$\pairsymboli{1}=\explicitpairii{1}$ is generated from a rule $\perp \rulearrow \pinit$ 
as follows: $\dfai{1}=\pdfasdfai{I}$, and $\imapi{i}=\{(\pinit,q_0^I)\}$ if $\pinit\in P_c$ and otherwise $\imapi{1}=\emptyset$. 
\label{defn:start-rule}
\end{definition}

Let $\pairsymboli{i}=\explicitpairii{i}$ be an \pairname~generated from some given PRS $\prs=\prstuple$, and denote $\dfai{i}=\dfatuple$. 
Note that for $A_1$, $|F|=1$, and we will see that $F$ is unchanged by all further rule applications. Hence we denote $F=\{q_f\}$ for all $A_i$.

Rules of type \ref{rule-start} extend $\dfai{i}$ by grafting a circular pattern to $q_0$, and then enabling that pattern if it is composite. 

\begin{definition}[Rules of type \ref{rule-start}]
	A rule $\perp \rulearrow \pinit$ with circular $\pinit$
	may extend $\explicitpairii{i}$ at the initial state $q_0$ of $\dfai{i}$ iff 
	$\defined(q_0)\cap\defined(q_0^I)=\emptyset$. 
	This creates the DFA
	$\dfai{i+1}=\langle \Sigma, q_0, Q\cup Q^I\setminus\{q_0^I\} , F, \delta\cup\delta^I\rep{q_0^I}{q_0}\rangle$. 
	If $\pinit\in P_c$ then $\imapi{i+1}=\imapi{i}\cup\{(\pinit,q_0)\}$, else $\imapi{i+1}=\imapi{i}$.
	\label{defn:start-rule-existing-dfa}
\end{definition}

Rules of type \ref{rule-composite-circular} graft a circular pattern $p^3=\langle\Sigma,q_0^3,q_x^3,F,\delta^3\rangle$ onto the join state $q_j$ of an enabled pattern instance $\phat$ in $\dfai{i}$, by merging $q_0^3$ with $q_j$. In doing so, they also enable the patterns composing $\phat$, provided they themselves are composite patterns.

\begin{definition} [Rules of type \ref{rule-composite-circular}]
A rule $p\rulearrow_c (p^1 \odot p^2) \connect p^3$ 
may extend $\explicitpairii{i}$ at 
the join state $q_j=\joinf(p,q,\dfai{i})$ of any instance $(p,q)\in\imapi{i}$,
provided
$\defined(q_j)\cap\defined(q_0^3)=\emptyset$. 
This creates 
$\explicitpairi{i+1}$
as follows:
$\dfai{i+1}=\langle \Sigma, q_0, Q\cup Q^3\setminus{q_0^3} , F, \delta\cup\delta^3\rep{q_0^3}{q_j}\rangle$, and 
$\imapi{i+1}=\imapi{i}\cup\{(p^k,q^k)\ |\ p^k\in P_c, k\in\{1,2,3\}\}$, where $q^1=q$ and $q^2=q^3=q_j$.
\label{defn:cyclic-rule}
\end{definition}

For an application of $r=p\rulearrow_c (p^1 \odot p^2) \connect p^3$, consider the languages $L_L$ and $L_R$ leading into and `back from' the considered instance $(p,q)$: $L_L=L(\dfai{i},q_0,q)$ and $L_R=L(\dfai{i},q_X^{(p,q)},q_f)$, where $q_X^{(p,q)}$
is the exit state of $(p,q)$. Where $L_L\cdot L_{p}\cdot L_R\subseteq L(\dfai{i})$, then now also 
$L_L\cdot L_{p^1}\cdot L_{p^3}\cdot L_{p_2}\cdot L_R\subseteq L(\dfai{i+1})$ (and moreover, $L_L\cdot (L_{p^1}\cdot L_{p^3} \cdot L_{p_2})^*\cdot L_R\subseteq L(\dfai{i+1})$ if $p$ is circular).
Example applications of rule \ref{rule-composite-circular} are shown in Figures \ref{fig:application-of-comp-rule}(i) and \ref{fig:application-of-comp-rule}(ii).

\begin{figure*}[t]
\includegraphics[scale = .45,trim = 20mm 00mm 00mm 0mm]{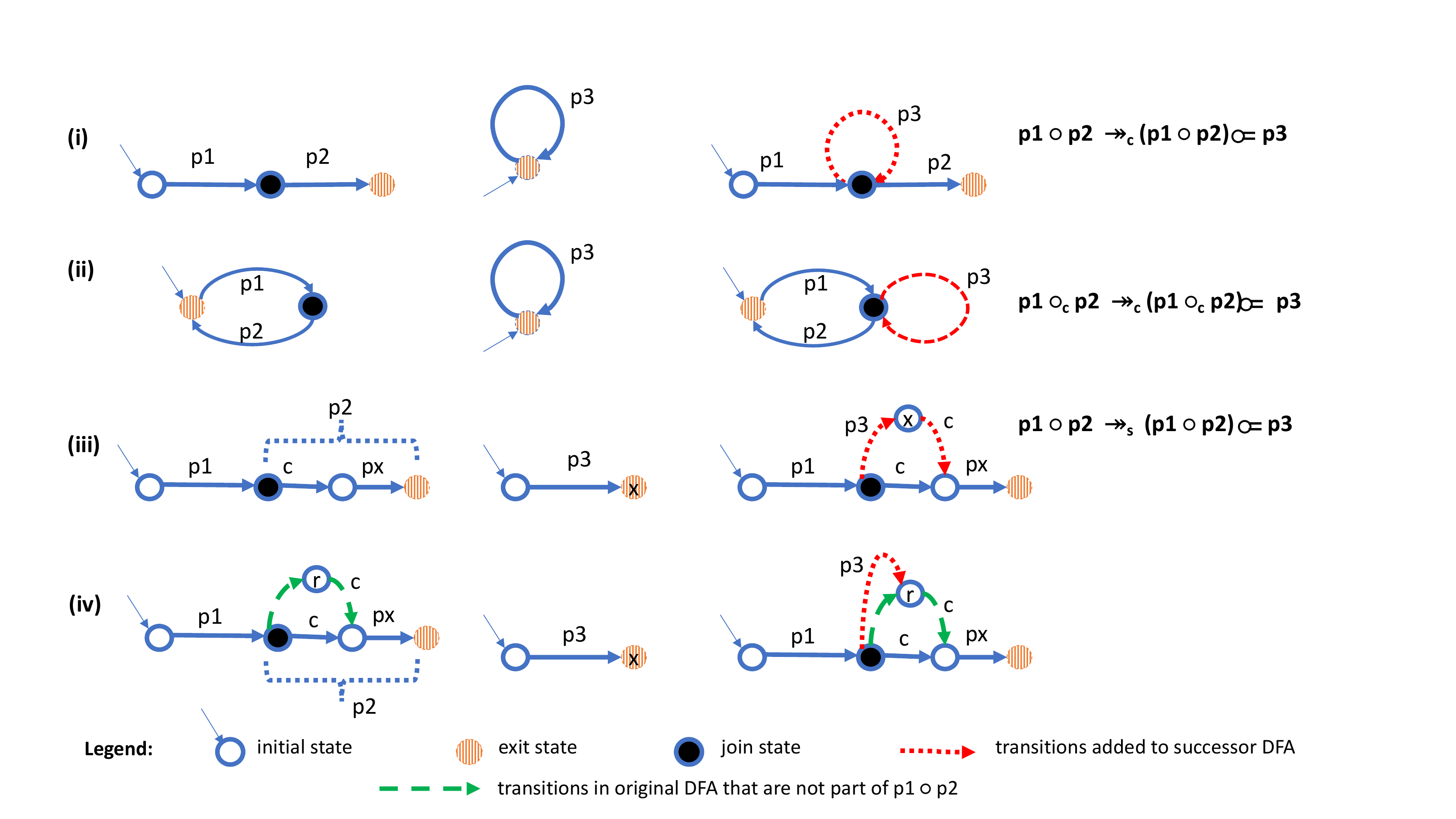}
\caption{Structure of DFA after applying rule of type 2 or type 3}
\label{fig:application-of-comp-rule}
\end{figure*}

For non-circular patterns we also wish to 
insert an optional $L_{p^3}$ between $L_{p^1}$ and $L_{p^2}$,
but this time we must avoid connecting the exit state $q_X^3$ to $q_j$ lest we loop over $p^3$ multiple times.
We therefore duplicate the outgoing transitions of $q_j$ in $p^1\compose p^2$ to the inserted state $q_X^3$ so that they may act as the connections back into the main DFA. 

\begin{definition} [Rules of type \ref{rule-composite}]
A rule $p\rulearrow_s (p^1 \compose p^2) \connect p^3$ 
may extend $\explicitpairii{i}$ at the join state 
 $q_j=\joinf(p,q,\dfa{i})$ of any instance $(p,q)\in\imapi{i}$,
provided
$\defined(q_j)\cap\defined(q_0^3)=\emptyset$.
This creates 
$\explicitpairi{i+1}$
as follows: 
$\dfai{i+1}=\langle \Sigma, q_0, Q\cup Q^3\setminus{q_0^3} , F, \delta\cup\delta^3\rep{q_0^3}{q_j}\cup C \rangle$ where 
$C=\{\ (q_X^3,\sigma,\delta(q_j,\sigma)) |\ 
\sigma\in\defined(p^{2},q_0^2) 
\}$,
and 
$\imapi{i+1}=\imapi{i}\cup\{(p^k,q^k)\ |\ p^k\in P_c, k\in\{1,2,3\}\}$ where $q^1=q$ and $q^2=q^3=q_j$.
\label{defn:acyclic-rule}
\end{definition}

We call the set $C$ \emph{connecting transitions}.
This application of this rule is depicted in Diagram (iii) of Figure \ref{fig:application-of-comp-rule}, where the transition labeled
`c' in this Diagram is a member of $C$ from our definition.  

Multiple applications of rules of type \ref{rule-composite} to the same instance $\phat$ will create several equivalent states in the resulting DFAs, as all of their exit states will have the same connecting transitions.  
These states are merged in a minimized representation,
as depicted in Diagram (iv) of Figure \ref{fig:application-of-comp-rule}.

We now formally define the language defined by a PRS. This is the language that we will assume a given finite sequence of DFAs is trying to approximate.

\begin{definition}[DFAs Generated by a PRS]
	We say that a PRS $\prs$ \emph{generates} a DFA $\dfa$, denoted $\dfa\in G(\prs)$, if there exists a finite sequence of \pairname s $\explicitpairi{1},...,\explicitpairi{i}$ obtained only by applying rules from $\prs$, for which $\dfa=\dfa_i$.
\end{definition}

\begin{definition}[Language of a PRS]
The language of a PRS $\prs$ is the union of the languages of the DFAs it can generate: $L(\prs)=\cup_{\dfa\in G(\prs)}L(\dfa)$.
\end{definition}

\subsection{Examples}
\label{sect:prs-examples}
\emph{EXAMPLE 1:}
Let $p^1$ and $p^2$ be the patterns  accepting `a' and `b' respectively.  
Consider the rule set $R_{ab}$ with two rules, 
$\perp \rulearrow p^1 \compose p^2$ and $p^1\compose p^2 \rulearrow_s (p^1 \compose p^2) \connect (p^1\compose p^2)$.
This rule set creates only one sequence of DFAs. 
Once the first rule creates the initial DFA, by
continuously applying the second rule, we obtain the infinite sequence of DFAs each satisfying $L(\dfai{i})=\{a^jb^j : 1 \le j \le i\}$, and so   
$L(R_{ab}) = \{a^ib^i : i >0 \}$.  Figure \ref{fig:composition-examples}(i)
presents $\dfai{1}$, while $\dfai{2}$ and $\dfai{3}$ appear in Figure \ref{fig:anbn}(i). Note that we can substitute any non-circular patterns for $p^1$ and $p^2$,
creating the language $\{x^iy^i : i >0 \}$ for any pair of non-circular pattern regular expressions $x$ and $y$.
\begin{figure}
\includegraphics[scale =.4,trim=00mm 70mm 00mm 5mm,clip=true]{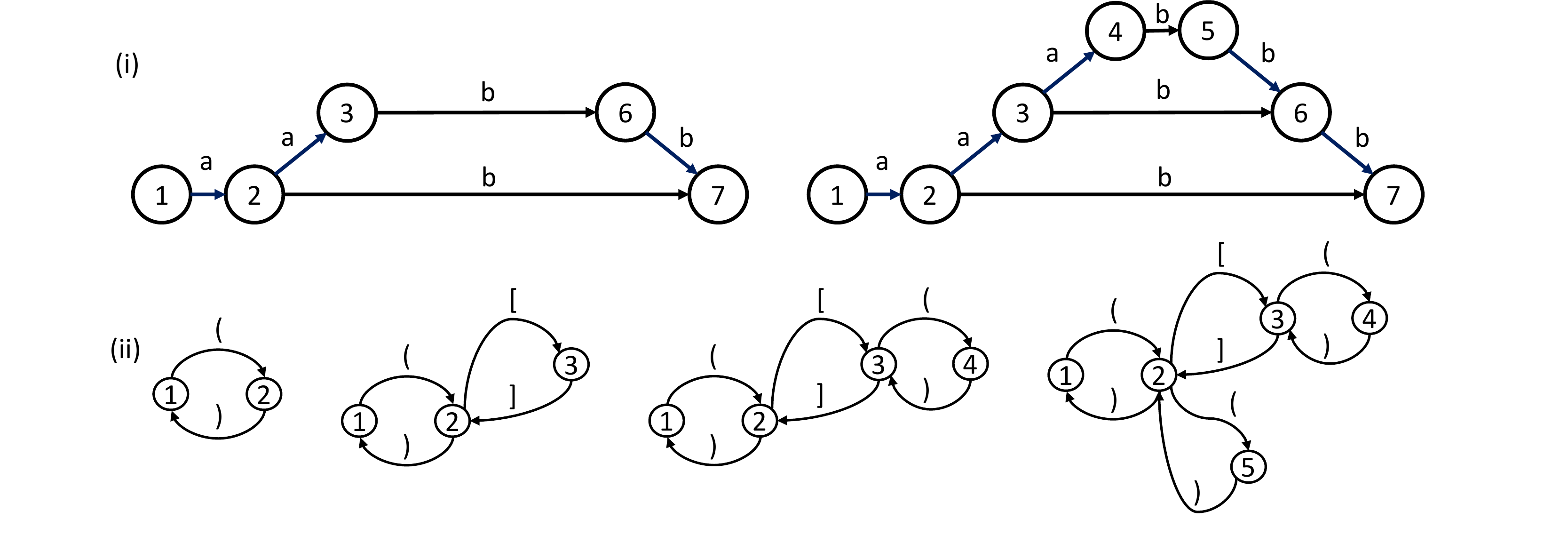}
\caption{DFAs sequences for $R_{ab}$ and $R_{Dyck2}$}
\label{fig:anbn}
\end{figure}

\emph{EXAMPLE 2:}
Let $p^1$,$p^2$,$p^4$, and $p^5$ be the non-circular patterns accepting `(', `)', `[', and `]' respectively.  Let $p^3 = p^1 \compose_c p^2$ and $p^6 = p^4 \compose_c p^5$.
Let $R_{Dyck2}$ be the PRS containing rules $\bot \rulearrow p^3$, $\bot \rulearrow p^6$, 
$p^3 \rulearrow_c \ (p^1 \compose_c p^2) \connect p^3$, 
$p^3 \rulearrow_c \ (p^1 \compose_c p^2) \connect p^6$,
$p^6 \rulearrow_c \ (p^4 \compose_c p^5) \connect p^3$, and 
$p^6 \rulearrow_c \ (p^4 \compose_c p^5) \connect p^6$.
$R_{Dyck2}$ defines the Dyck language of order $2$.  Figure \ref{fig:anbn} (ii) shows one of its possible DFA-sequences. 

\section{PRS Inference Algorithm}
\label{Sect:rule-inference}
We have shown how a PRS can generate a sequence of DFAs that can define, in the limit, a non-regular language.  However, we are interested in the dual problem:
given a sequence of DFAs generated by a PRS $\prs$, can we reconstruct $\prs$? 
Coupled with an \lstar extraction of DFAs from a trained RNN, solving this problem will enable us to extract a PRS language from an RNN, provided the \lstar extraction also follows a PRS pattern (as we often find it does).

We present an algorithm for this problem, and show its correctness in Section \ref{Sect:inference-correctness}. 
We note that in practice the DFAs we are given are not ``perfect''; they
contain noise that deviates from the PRS. We therefore augment this algorithm in Section \ref{Sect:deviations},
allowing it to operate smoothly even on imperfect DFA sequences created from RNN extraction.

In the following, for each pattern instance $\phat$ in $\dfai{i}$, we denote by $p$ the pattern that it is an instance of. Additionally, for each consecutive DFA pair $\dfai{i}$ and $\dfai{i+1}$, we refer by $\phat^3$ to the new pattern instance in $\dfai{i+1}$.
\paragraph{Main steps of inference algorithm.}
Given a sequence of DFAs $\dfai{1}\cdots \dfai{n}$, the algorithm infers $\prs= \langle \Sigma,P, P_c, R\rangle$
in the following stages:
\begin{enumerate}
\item Discover the initial pattern instance $\phat^I$ in $\dfai{1}$. Insert $p^I$ into $P$ and mark $\phat^I$ as enabled.  
Insert the rule $\bot \rightarrow p^I$ into $R$.  \label{InferAlg-step1}
\item For $i, 1 \le i \le n-1$:
\begin{enumerate}
\item Discover the new pattern instance $\phat^3$ in $\dfai{i+1}$ that extends $\dfai{i}$. \label{InferAlg-step2a}
\item If $\phat^3$ starts at the initial state $q_0$ of $\dfai{i+1}$, then it is an application of a rule of type \ref{rule-start}.
Insert $p^3$ into $P$ and mark $\phat^3$ as enabled, and add the rule $\bot \rulearrow p^3$ to $R$. 
\label{InferAlg-step2b}
\item Otherwise ($\phat^3$ does not start at $q_0$), find the unique enabled pattern $\phat = \phat^1\odot \phat^2$ in $\dfai{i}$ s.t. $\phat^3$'s initial state $q$ is the join state of $\phat$. 
Add $p^1,p^2$, and $p^3$ to $P$ and $p$ to $P_c$, and mark $\phat^1$,$\phat^2$, and $\phat^3$ as enabled.
If $\phat^3$ is non-circular add the rule  $p \rulearrow_s (p^1 \compose p^2) \connect p^3$ to $R$,  
otherwise add the rule $p \rulearrow_c (p^1 \odot p^2) \connect p^3$ to $R$. 
\label{InferAlg-step2c}
\end{enumerate}
\item Define $\Sigma$ to be the set of symbols used by the patterns $P$.   
\end{enumerate}
Once we know the newly created pattern $\pinit$ or  $\phat^3$ (step \ref{InferAlg-step1} or \ref{InferAlg-step2a}) and the pattern $\phat$ that it is grafted onto 
(step \ref{InferAlg-step2c}), creating the rule is straightforward.   We elaborate below on the how the algorithm accurately finds these patterns.   

\label{Sect:PatDiscovery}
\paragraph{Discovering new patterns $\phat^I$ and $\phat^3$}
The first pattern $\pinit$ is easily discovered; it is $\dfai{1}$, the first DFA.
To find those patterns added in subsequent DFAs, we need to isolate the pattern added between $\gi$ and $\gp$, by identifying which states in $\gp= \langle \Sigma, q_0',Q',F',\delta'\rangle$ are `new' relative to $\gi=\dfatuple$.
From the PRS definitions, we know that there is a subset of states and transitions in $\gp$ that is isomorphic to $\gi$:
\begin{definition} (Existing states and transitions) For every $q'\in Q'$, we say that $q'$ \emph{exists} in $\gi$, with \emph{parallel state} $q\in Q$, 
iff there exists a sequence $w \in \Sigma^*$ such that $q = \deltahat(q_0,w)$, $q' = \deltahat'(q_0,w)$, and neither is a sink reject state. Additionally, for every $q_1',q_2'\in Q'$ with parallel states $q_1,q_2\in Q$, we say that $(q_1',\sigma,q_2')\in\delta'$ \emph{exists} in $\gi$ if $(q_1,\sigma,q_2)\in \delta$.  
\label{def:equiv}
\end{definition}

We refer to the states and transitions in $\gp$ that do not exist in $\gi$ as the \emph{new} states and transitions of $\gp$, denoting them $Q_N\subseteq Q'$ and $\delta_N\subseteq \delta'$ respectively. 
By construction of PRSs, each state in $\gp$ has at most one parallel state in $\gi$, and marking $\gp$'s existing states can be done in one simultaneous traversal of the two DFAs, using any exploration that covers all the states of $\gi$. 

The new states are a new pattern instance $\phat$ in $\gp$, excluding its initial and possibly its exit state.
The initial state of $\phat$ is the existing state $q_s'\in Q'\setminus Q_N$ that has outgoing new transitions. 
The exit state $q_X'$ of $\phat$ is identified by the following {\em Exit State Discovery} algorithm: 
\begin{enumerate}
    \item If $q_s'$ has incoming new transitions, then $\phat$ is circular: $q_X'=q_s'$. (Fig. \ref{fig:application-of-comp-rule}(i), (ii)).
    
    \item Otherwise $p$ is non-circular.  If  $\phat$ is the first (with respect to the DFA sequence) non-circular pattern to have been grafted onto $q_s'$, then $q_X$ is the unique new state whose transitions into $\dfai{i+1}$ are the \emph{connecting} transitions from Definition \ref{defn:acyclic-rule} (Fig. \ref{fig:application-of-comp-rule} (iii)).
    
    \item If there is no such state then $\phat$ is not the first non-circular pattern grafted onto $q_s'$.  In this case, $q_X'$ is the unique existing state $q_X' \ne q_s'$ with new incoming transitions but no new outgoing transitions.  (Fig. \ref{fig:application-of-comp-rule}(iv)).
\end{enumerate}
Finally, the new pattern instance is $p=\langle\Sigma,q_s',Q_p,q_X',\delta_p \rangle$, where $Q_p=Q_N\cup\{q_s',q_X'\}$ and $\delta_p$ is the restriction of $\delta_N$ to the states of $Q_p$. 

\subsubsection{Discovering the pattern $\phat$} \label{infer-join}
Once we have found the pattern $\phat^3$ in step \ref{InferAlg-step2a}, we need to find the pattern $\phat$ to which it has been grafted. 
We begin with some observations:
\begin{enumerate}
    \item The join state of a composite pattern is always different from its initial and exit states (\emph{edge states}): we cannot compose circular patterns, and there are no `null' patterns.
    \item For every two enabled pattern instances $\phat,\phat'\in \imapi{i}$, $\phat\neq\phat'$, exactly 2 options are possible: either \begin{enumerate*}
    \item every state they share is an edge state of at least one of them, or
    \item one ($p^s$) is contained entirely in the other ($p^c$), and the containing pattern $p^c$ is a composite pattern with join state $q_j$ such that $q_j$ is either one of $p^s$'s edge states, or $q_j$ is not in $p^s$ at all. \end{enumerate*}
\end{enumerate}

Together, these observations imply that
no two enabled pattern instances in a DFA can share a join state. We prove these observations in Appendix \ref{sect:appendixObs}.

Finding the pattern $\phat$ onto which $\phat^3$ has been grafted is now straightforward. Denoting $q_j$ as the parallel of $\phat^3$'s initial state in $\dfai{i}$, we seek the enabled composite
pattern instance $(p,q)\in\imapi{i}$ for which $\joinf(p,q,\dfai{i})=q_j$. If none is present, we seek the only enabled instance $(p,q)\in\imapi{i}$ that contains $q_j$ as a non-edge state, 
but is not yet marked as a composite. (Note that if two enabled instances share a non-edge state, we must already know that the containing one is a composite, otherwise we would not have 
found and enabled the other).

\subsection{Correctness}
\label{Sect:inference-correctness}
\label{Sect:InferenceCorrectness}
\begin{definition}
A PRS $\prs=\langle \Sigma,P,P_c,R\rangle$ is a {\em  minimal generator (MG)} of a sequence of DFAs $S= \dfa_1,\dfa_2,...\dfa_n$ iff it is sufficient and necessary for that sequence, i.e.: \begin{enumerate*}
    \item it generates $S$,
    \item removing any rule $r\in R$ would render $\prs$ insufficient for generating $S$, and
    \item removing any element from $\Sigma,P,P_c$ would make $\prs$ no longer a PRS.
\end{enumerate*}
\end{definition}

\begin{lemma}
Given a finite sequence of DFAs, the minimal generator of that sequence, if it exists, is unique.
\label{lemma:minimal-generator}
\end{lemma}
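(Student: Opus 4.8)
The plan is to show that the way a PRS can generate a given sequence is essentially canonical: although $S=\dfa_1,\dots,\dfa_n$ may in principle be generated by several PRSs, \emph{every} PRS that generates $S$ does so through the same abstract sequence of rule applications, and minimality then pins the PRS down exactly (up to renaming of states, which is the sense in which patterns and DFAs are individuated here).

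I would first record that any derivation $\explicitpairi{1},\dots,\explicitpairi{m}$ witnessing that a PRS $\prs$ generates $S$ has $m=n$, with DFA components $\dfa_1,\dots,\dfa_n$ in order: a rule application never removes states, transitions, or final states, so it can only preserve or enlarge the accepted language, and since the languages along $S$ are strictly increasing every step must strictly enlarge it, making the DFA components pairwise distinct. The core of the argument is then an induction on $i$ with two coupled claims: \textbf{(a)} the enabled set $\imapi{i}$, read as a set of pairs (pattern instance inside $\dfai{i}$, its join state), is a function of $\dfa_1,\dots,\dfa_i$ alone, hence identical across all generating derivations; and \textbf{(b)} the rule applied at step $i\to i+1$ is forced by $\dfai{i}$, $\dfai{i+1}$ and $\imapi{i}$.

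For claim~\textbf{(b)}: the new states and transitions $Q_N,\delta_N$ of $\dfai{i+1}$ relative to $\dfai{i}$ are determined purely by pairing states along common prefixes (Definition~\ref{def:equiv}); the new pattern instance $\phat^3$ and its exit state are then extracted exactly as in the Exit State Discovery procedure of Section~\ref{Sect:PatDiscovery}, which inspects only $\dfai{i}$ and $\dfai{i+1}$; and the graft target $q_j$ is the parallel in $\dfai{i}$ of $\phat^3$'s initial state. If $q_j=q_0$ the step uses a rule of type~\ref{rule-start}, forced to be $\perp\rulearrow p^3$ with $p^3$ the isomorphism type of $\phat^3$ (necessarily circular here); likewise $\dfai{1}$, read as a pattern, forces the rule $\perp\rulearrow p^I$. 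If $q_j\neq q_0$, the observations of Appendix~\ref{sect:appendixObs} guarantee that at most one enabled instance has join state $q_j$, and legality of the application makes it exactly one (after also resolving the ``contained but not-yet-composite'' case exactly as in Section~\ref{infer-join}); call it $\phat$. Splitting $\phat$ at $q_j$ fixes its decomposition $\phat^1\odot\phat^2$, and the circularity of $\phat^3$ fixes the composition type, so the rule must be $p\rulearrow_c(p^1\odot p^2)\connect p^3$ or $p\rulearrow_s(p^1\compose p^2)\connect p^3$ with $p,p^1,p^2,p^3$ the isomorphism types of $\phat,\phat^1,\phat^2,\phat^3$. For claim~\textbf{(a)}, $\imapi{i+1}$ is obtained from $\imapi{i}$ by adjoining precisely the instances of $p^1,p^2,p^3$ (at their prescribed states) that are \emph{composite} in $\prs$; one then argues that compositeness is itself forced, since a pattern type belongs to $P_c$ in a \emph{minimal} generator exactly when some descendant of one of these instances is the graft target $q_j$ of a later step --- and by claim~\textbf{(b)} that target depends only on the DFA sequence, not on the sets $\imapi{j}$. (If $p^k$ were composite without ever serving as the left-hand pattern of an applied rule, deleting $p^k$ from $P_c$ would leave a PRS that still generates $S$, contradicting minimality condition~(3).)

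With the abstract trace of the $n$ steps thus determined by $S$ alone, let $R^\star$ be the resulting forced set of rules. Any minimal generator $\prs=\langle\Sigma,P,P_c,R\rangle$ of $S$ has $R\supseteq R^\star$, and minimality condition~(2) forces every rule of $R$ to be used at some step --- hence to lie in $R^\star$ --- so $R=R^\star$. Then $P$ must contain every pattern named in $R^\star$ and, so that $\langle P,P_c\rangle$ is a pattern pair, the two components of each composite among them; $P_c$ must be exactly the patterns identified above as composite-and-used; $\Sigma$ is the symbol set of $P$; and minimality condition~(3) forbids anything further --- all of this being a function of $S$ alone. Hence any two minimal generators of $S$ agree, as claimed. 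I expect claim~\textbf{(a)} to be the main obstacle: it must break the apparent circularity in which $\imapi{i}$ seems to depend on $P_c$ while $P_c$ is pinned down by which states are grafted onto; the way out is precisely the observation in claim~\textbf{(b)} that each step's graft target is a pure function of the DFA sequence.
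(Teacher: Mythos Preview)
Your proposal is correct and follows essentially the same route as the paper's proof: identify the newly inserted pattern $\phat^3$ and the graft point $q_j$ from the DFA pair alone, invoke the join-state uniqueness observations of Section~\ref{infer-join} (Appendix~\ref{sect:appendixObs}) to determine the left-hand side $p=p^1\odot p^2$, and then use minimality to fix $R$, $P$, $P_c$, and $\Sigma$ in turn. The paper frames this as a first-point-of-disagreement argument between two hypothetical MGs rather than constructing a canonical $R^\star$, and it glosses over the circularity you flag in claim~(a) (simply asserting that agreement on prior rules implies agreement on $\imapi{i}$); your proposed resolution --- that the graft target $q_j$ is read off the DFA pair independently of $\imapi{i}$, so the unique candidate instance with that join state is forced --- is exactly what is implicitly needed to make the paper's argument go through.
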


\begin{theorem}
Let  $\dfa_1,\dfa_2,...\dfa_n$ be a finite sequence of DFAs that has a minimal generator $\prs$.   Then the PRS Inference Algorithm will discover $\prs$.
\label{thm:inference-correctness}
\end{theorem}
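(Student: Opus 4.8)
The plan is to prove the theorem by induction on the length $i$ of the processed prefix $\dfa_1,\ldots,\dfa_i$, showing that the algorithm's internal state stays in lockstep with a $\prs$-derivation of the sequence. Since $\prs$ generates $\dfa_1,\ldots,\dfa_n$, fix a derivation $\pairsymboli{1},\ldots,\pairsymboli{n}$ with $\pairsymboli{i}=\explicitpairii{i}$. The invariant to maintain is: after the algorithm finishes iteration $i$, (a)~the instances it has marked enabled in $\dfai{i}$ are exactly $\imapi{i}$, (b)~the patterns it has placed in $P$, resp.\ $P_c$, are exactly the base, resp.\ composite, patterns occurring in the first $i-1$ rule applications of the derivation, and (c)~$R$ consists of exactly those rule applications, all read up to isomorphic renaming of pattern states (the sense in which ``discover $\prs$'' must be understood, since the algorithm names pattern states after states of the input DFAs); a byproduct of the argument is that this derivation is in fact the only one of the sequence from $\prs$. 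Granting the invariant for $i=n$, the output $\prs^{\mathrm{alg}}$ generates the sequence — replay the reconstructed derivation — and it is moreover a \emph{minimal} generator: each of its rules is used at some step, and each of its patterns and symbols is forced by the rules and the pattern-pair constraints. By Lemma \ref{lemma:minimal-generator} the minimal generator is unique, so $\prs^{\mathrm{alg}}=\prs$.

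The base case is immediate from Definition \ref{defn:start-rule}: $\dfai{1}=\pdfasdfai{I}$ is a single pattern, which the algorithm takes as $\pinit$, adding $\bot\rulearrow\pinit$ to $R$ and marking $\pinit$ enabled exactly when $\pinit\in P_c$. For the inductive step I would show that from $\dfai{i}$ (with the correct enabled set, by the induction hypothesis) and $\dfai{i+1}$ the algorithm recovers the right new pattern $\phat^3$, the right graft site, and the right rule. This breaks into three pieces. First, the new states $Q_N$ and transitions $\delta_N$ are well-defined and correctly computed: by construction of PRS rules each state of $\dfai{i+1}$ has at most one parallel state in $\dfai{i}$, so the simultaneous traversal of Definition \ref{def:equiv} is unambiguous, and $Q_N$ together with $\delta_N$ forms exactly the grafted pattern instance with its edge states stripped off. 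Second, the Exit State Discovery routine returns the correct exit state: this is a case analysis matching the three rule forms. A pattern is circular iff $q_0=q_X$, and only a circular graft (a rule of type \ref{rule-composite-circular}, or of type \ref{rule-start} applied at $q_0$) merges an exit state onto the start state $q_s'$ of the grafted instance, so $q_s'$ has an incoming new transition iff $\phat^3$ is circular; one must check here that in the non-circular case the connecting transitions of Definition \ref{defn:acyclic-rule} do not create a spurious new incoming transition at $q_s'$, which follows from the pattern axioms (a non-circular pattern has no cycle through its initial state) together with the precondition $\defined(q_j)\cap\defined(q_0^3)=\emptyset$. When $\phat^3$ is the first non-circular graft at $q_s'$, those connecting transitions single out its exit state $q_X^3$; otherwise the minimization depicted in Figure \ref{fig:application-of-comp-rule}(iv) makes the exit state the unique existing state other than $q_s'$ that gains new incoming but no new outgoing transitions. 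Third, with $\phat^3$ in hand the graft site is found: if $\phat^3$ starts at $q_0$ the step is an application of a rule of type \ref{rule-start} (Definition \ref{defn:start-rule-existing-dfa}); otherwise, letting $q_j$ be the parallel of $\phat^3$'s initial state in $\dfai{i}$, the observations underlying Section \ref{infer-join} — a join state is never an edge state, no two enabled instances share a join state, and the containment structure of enabled instances — guarantee that there is either a unique enabled composite instance with join state $q_j$, or a unique enabled-but-not-yet-composite instance containing $q_j$ as a non-edge state. In either case the instance $\phat$ and its decomposition $\phat=\phat^1\odot\phat^2$ at $q_j$ are forced, the rule form ($\rulearrow_c$ versus $\rulearrow_s$, and $\odot\in\{\compose,\compose_c\}$) is read off from whether $\phat$ and $\phat^3$ are circular, and replaying that rule on $\dfai{i}$ reproduces $\dfai{i+1}$ together with the enabled-set update of Definitions \ref{defn:cyclic-rule}--\ref{defn:acyclic-rule}, which is exactly what the algorithm records; this restores the invariant.

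I expect the main obstacle to be the inductive step, and within it the claim that the reconstructed enabled set never drifts from $\imapi{i}$: this is what makes the graft-site search of Section \ref{infer-join} well-posed, and it is the point where the appendix observations do the real work, so those must be established first. The secondary subtleties are (i)~verifying that consecutive DFAs in a $\prs$-generated sequence really do differ by exactly one grafted pattern instance with the edge/new-state structure the discovery routines assume — in particular the ``no spurious incoming transition at $q_s'$'' point and the unique-parallel-state property — and (ii)~the closing bookkeeping that $\prs^{\mathrm{alg}}$ is exactly minimal, so that Lemma \ref{lemma:minimal-generator} pins it down to $\prs$ rather than to some larger generator of the sequence. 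Once the discovery routines are shown correct, the rest is a routine replay of the PRS semantics.
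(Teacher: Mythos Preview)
Your proposal is correct and follows essentially the same strategy as the paper: show that the algorithm recovers, step by step, the unique rule application explaining each $\dfai{i}\to\dfai{i+1}$ (with the Exit State Discovery routine handling the only real subtlety, separating pattern transitions from connecting transitions), and then invoke the uniqueness of the minimal generator (Lemma~\ref{lemma:minimal-generator}). The paper's own proof is considerably terser---it simply says the argument ``mimics'' the proof of Lemma~\ref{lemma:minimal-generator} with the algorithm's output in the role of $\prs_2$---whereas you spell out the induction and the invariant on enabled instances explicitly; but the content is the same.
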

The proofs for these claims are given in Appendix \ref{Sect:appendixCorrect}.

\subsection{Deviations from the PRS framework}
\label{Sect:deviations}
Given a sequence of DFAs generated by the rules of PRS $\prs$, the inference algorithm given above will faithfully infer $\prs$ (Section \ref{Sect:InferenceCorrectness}).
In practice however, we will want to apply the algorithm to a sequence of DFAs extracted from a trained RNN using the \lstar algorithm (as in \cite{WeissGoldbergYahav2018}). Such a sequence may contain noise: artifacts from an imperfectly trained RNN, or from the behavior of \lstar (which does not necessarily create PRS-like sequences).
The major deviations are incorrect pattern creation,
simultaneous rule applications, and slow initiation.

\paragraph{Incorrect pattern creation}
\label{Sect:IncorrectPats}

Either due to inaccuracies in the RNN classification, or as artifacts of the \lstar process, incorrect patterns are often inserted into the DFA sequence. Fortunately, the incorrect patterns that get inserted are somewhat random and so rarely repeat, and we can discern between the `legitimate' and `noisy' patterns being added to the DFAs using a {\it voting} and {\it threshold} 
scheme.

The \emph{vote} for each discovered pattern $p\in P$ is the number of times it has been inserted as the new pattern between a pair of DFAs $A_i,A_{i+1}$ in $S$. We set a \emph{threshold} for the minimum vote a pattern needs to be considered valid, and only build rules around the connection of valid patterns onto the join states of other valid patterns. To do this, we modify the flow of the algorithm: before discovering rules, we first filter incorrect patterns.    

We modify step 2 of the algorithm, splitting it into two phases:
\emph{Phase 1:} Mark the inserted patterns between each pair of DFAs, and compute their votes. Add to $P$ those whose vote is above the threshold.
\emph{Phase 2:} Consider each DFA pair $\dfai{i},\dfai{i+1}$ in order. If the new pattern in $\dfai{i+1}$ is valid, and its initial state's parallel state in $\dfai{i}$ also lies in a valid pattern, then synthesize the rule adding that pattern according to the original algorithm in Section \ref{Sect:rule-inference}. Whenever a pattern is discovered to be composite, we add its composing patterns as valid patterns to $P$.

A major obstacle to our research was producing a high quality sequence of DFAs faithful to the target language, as almost every sequence produced has some
noise.  The voting scheme greatly extended the reach of our algorithm.

\paragraph{Simultaneous rule applications} In the theoretical framework, $\gp$ differs from $\gi$ by applying a \emph{single} PRS rule, and therefore $q_s'$ and $q_X'$ are uniquely defined.
\lstar however does not guarantee such minimal increments between DFAs. In particular, it may apply multiple PRS rules between two subsequent DFAs, extending $\dfai{i}$ with several patterns.
To handle this, we expand the initial and exit state discovery methods given in Section \ref{Sect:PatDiscovery}: 
\begin{enumerate}
    \item Mark the new states and transitions $Q_N$ and $\delta_N$ as before.
    \item Identify the \emph{set} of new pattern instance initial states (\emph{pattern heads}): the set $H\subseteq Q'\setminus Q_N$ of states in $\gp$ with outgoing new transitions.
    \item For each pattern head $q'\in H$, compute the \emph{relevant} sets $\delta_{N|q'}\subseteq \delta_N$ and $Q_{N|q'}\subseteq Q_N$ of new transitions and states: the members of $\delta_N$ and $Q_N$ that are reachable from $q'$ \emph{without passing through any existing transitions}.
    \item For each $q'\in H$, restrict to $Q_{N|q'}$ and $\delta_{N|q'}$ and compute $q_X'$ and $p$ as before.
\end{enumerate}
If $\dfai{i+1}$'s new patterns have no overlap and do not create an ambiguity around join states (e.g., do not both connect into instances of a single pattern whose join state has not yet been determined), then they may be handled independently and in arbitrary order. They are used to discover rules and then enabled, as in the original algorithm.

Simultaneous but dependent rule applications -- such as inserting a pattern and then grafting another onto its join state -- are more difficult to handle, as it is not always possible to determine which pattern was grafted onto which.
However, there is a special case which appeared in several of our experiments (examples L13 ad L14 of Section \ref{sect:experiments}) for which we developed a technique as follows:

Suppose we discover a rule $r_1: p_0 \rulearrow_s (p_l \compose p_r) \connect p$, and $p$ contains a cycle $c$ around some internal state $q_j$. If later
 another rule inserts a pattern $p_n$ at the state $q_j$, we understand that $p$ is in fact a composite pattern, with
$p = p_1 \compose p_2$ and join state $q_j$. However, as patterns do not contain cycles at their edge states,  $c$ cannot be a part of either $p_1$ or $p_2$.
We conclude that the addition of $p$ was in fact a simultaneous application of two rules: $r_1':p_0\rulearrow_s (p_l\compose p_r)\connect p'$ and $r_2:p'\rulearrow_c (p_1\compose p_2)\connect c$, where $p'$ is $p$ without the cycle $c$, and update our PRS and our DFAs' enabled pattern instances accordingly.  The case when $p$ is circular is handled similarly.  

\paragraph{Slow initiation}
\label{Sect:first-dfa}
Ideally, $\dfai{1}$ would directly supply an initial rule $\bot \rulearrow p^I$ to our PRS.
In practice, we found that the first couple of DFAs generated by \lstar -- which deal with extremely short sequences -- have completely incorrect structure, and it takes the algorithm some time to stabilise.
Ultimately we solve this by leaving discovery of the initial rules to the \emph{end} of the algorithm, at which point we have a set of `valid' patterns that we are sure are part of the PRS. From there we examine the {\em last} DFA $\dfai{n}$ generated in the sequence, note all the enabled instances $(\pinit,q_0)$ at its initial state, and generate a rule
$\bot \rulearrow \pinit$ for each of them.  
Note however that this technique will not recognise patterns $\pinit$ that do not also appear as an extending pattern $p_3$ elsewhere in the sequence (and therefore do not meet the threshold).

\section{Converting a PRS to a CFG}
\label{Sect:expressibility}
We present an algorithm to convert a given PRS to a context free grammar (CFG), making the rules extracted by our algorithm more accessible. 

\paragraph{A restriction:}
Let $\prs = \langle \Sigma, P, P_c, R\rangle$ be a PRS.   For simplicity, we 
restrict the PRS so that every pattern $p$ can only appear on the LHS of rules of type \ref{rule-composite-circular}
or only on the LHS of rules of type \ref{rule-composite} but cannot only appear on the LHS of both
types of rules.   Similarly, we assume that for each rule $\perp \rightarrow p_I$, the RHS patterns $p_I$ are all
circular or non-circular\footnote{
This restriction is natural: Dyck grammars and all of the examples in Sections \ref{sect:prs-examples} and \ref{Sect:experimental-langs} conform to this restriction. 
}.   In Appendix \ref{Sect:AppendixGeneralizedCfg} we show how to create a CFG without this restriction.

We will create a CFG $G=\langle \Sigma,N,S,Prod \rangle$,
where $\Sigma$, $N$, $S$, and $Prod$ are the terminals (alphabet), non-terminals,
start symbol and productions of the grammar. $\Sigma$ is the same alphabet of $\prs$, and we take $S$ as a special start symbol. We now describe how we obtain $N$ and $Prod$.

For every pattern $p \in P$, let $G_p=\langle \Sigma_p,N_p,Z_p,Prod_p\rangle$ be a CFG describing $L(p)$.  Recall that $P_C$ are composite patterns.
Let $P_Y \subseteq P_C$ be those patterns that appear on the LHS of a rule of type \ref{rule-composite-circular} ($\rulearrow_c$).   
Create the non-terminal $C_{S}$ and for each $p \in P_{Y}$,
create an additional non-terminal $C_p$.  We set 
$N = \{ S, C_{S} \} \bigcup\limits_{p \in P} \{N_p\} \bigcup\limits_{p \in P_Y} \{C_p\}$.

Let $\bot \rulearrow p_I$ be a rule in $\prs$.  If $p_I$ is non-circular, create a production $S ::= Z_{p_I}$.  If 
$p_I$ is circular, create the productions $S ::= S_{C}$, $S_{C} ::= S_{C} S_{C}$ and
$S_{C} ::= Z_{p_I}$.

For each rule $p \rulearrow_s (p_1 \compose p_2) \connect p_3$ create a production $Z_p ::= Z_{p_1} Z_{p_3} Z_{p_2}$.  
For each rule  $p \rulearrow_c (p_1 \compose p_2) \connect p_3$ create the productions 
$Z_p ::= Z_{p_1} C_p Z_{p_2}$, $C_p ::= C_p C_p$, and $C_p ::= Z_{p_3}$. 
Let $Prod'$ be the all the productions defined by the above process.  We set $Prod = \{ \bigcup\limits_{p \in P} Prod_p \} \cup Prod'$. 
\begin{theorem}
Let $G$ be the CFG constructed from $\prs$ by the procedure given above.  Then $L(\prs) = L(G)$.
\label{theorem:CFG}
\end{theorem}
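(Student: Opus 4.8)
The plan is to establish the two inclusions $L(\prs)\subseteq L(G)$ and $L(G)\subseteq L(\prs)$ separately, both organised around a bookkeeping lemma linking the non-terminal languages of $G$ to the \emph{local span languages} of pattern instances inside the DFAs that $\prs$ generates. For a pattern instance $(p,q)$ in a DFA $\dfa$ I would work with the language $\mathrm{span}(\dfa,p,q)$ read off between the instance's initial and exit states \emph{inside the sub-automaton induced by that instance together with everything grafted onto it and its descendants} (for circular $p$ this is $L(\dfa,q,q)$ in that sub-automaton), and for a non-terminal $X$ of $G$ I write $L_G(X)=\{w\mid X\derives w\}$. Two easy facts are used throughout: since $Prod_p\subseteq Prod$ we always have $L_p\subseteq L_G(Z_p)$; and for a composite $p=p^1\odot p^2$ I take $G_p$ to be assembled from $G_{p^1},G_{p^2}$ so that already $Z_p\derives Z_{p^1}Z_{p^2}$ when $\odot=\compose$ and $Z_p$ derives $(Z_{p^1}Z_{p^2})^{*}$ when $\odot=\compose_c$ (this is without loss of generality; we only need $L_G(Z_p)$ closed under the composition it came from). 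On the PRS side I lean on the facts that $\imapi{i}$ only grows along a derivation, that no two enabled instances share a join state (Section \ref{infer-join}), and that applying a rule at a join state $q_j$ changes only $\defined(q_j)$ and introduces fresh states, so rule applications targeting pairwise-distinct join states commute and yield the same DFA in any order.

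For $L(\prs)\subseteq L(G)$ I would induct on the length $k$ of a generating sequence $\explicitpairii{1},\dots,\explicitpairii{k}$, maintaining the invariant: (i) $L(\dfai{k})\subseteq L_G(S)$; and (ii) for every enabled composite instance $(p,q)\in\imapi{k}$, $\mathrm{span}(\dfai{k},p,q)\subseteq L_G(Z_p)$; while non-enabled whole-inserted instances are base-pattern instances whose local span is $L_p=L_G(Z_p)$. The base case follows from the start productions. In the step I case on the last rule: type \ref{rule-start} (circular) glues mutually independent loops through circular start patterns at $q_0$ (independence from $\defined(q_0)\cap\defined(q_0^I)=\emptyset$), matched by the $S$- and $S_C$-productions; type \ref{rule-composite-circular} grafting circular $p^3$ onto $q_j=\joinf(p,q,\dfai{k-1})$ with $p=p^1\odot p^2$ adds one extra loop through $p^3$ at $q_j$, so the new local span of $(p,q)$ is the old one together with $\mathrm{span}(\dfai{k},p^1,q)\cdot(\mathrm{span}(\dfai{k},p^3,q_j))^{+}\cdot(\text{path from }q_j\text{ to }q_X^{(p,q)})$, which, applying the IH to the now-enabled $p^1,p^2,p^3$, is generated by $Z_p::=Z_{p^1}C_pZ_{p^2}$, $C_p::=C_pC_p$, $C_p::=Z_{p^3}$. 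The delicate case is type \ref{rule-composite}: I verify that the connecting set $C$ makes $q_X^3$'s outgoing transitions faithfully copy ``$p^2$ after its first letter'' (using $\varepsilon\notin L_{p^2}$, so the inserted block contributes exactly $L_{p^1}L_{p^3}L_{p^2}$ with no spurious repetition of $p^3$) and that the merging of equivalent exit states in Figure \ref{fig:application-of-comp-rule}(iv) is language-preserving; then the new local span of $(p,q)$ is the old one plus $\mathrm{span}(\dfai{k},p^1,q)\cdot\mathrm{span}(\dfai{k},p^3,q_j)\cdot(\text{remainder of }p^2)$, generated by $Z_p::=Z_{p^1}Z_{p^3}Z_{p^2}$. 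Finally I propagate (i) from (ii) by composing with the fixed ``into $q$'' and ``out of $q_X^{(p,q)}$'' paths, using that $L_G(S)$ treats $Z_p$ as a replaceable unit.

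For $L(G)\subseteq L(\prs)$ the core claim is: for every composite $p\in P_c$, every $w$ with $Z_p\derives w$, and every $\dfa\in G(\prs)$ containing an enabled instance $(p,q)$, there is $\dfa'\in G(\prs)$ reachable from $\dfa$ by further rule applications with $w\in\mathrm{span}(\dfa',p,q)$, the applications being confined to $q$'s join region so that independent such extensions for disjoint instances merge. I would prove this by induction on the length of $Z_p\derives w$, peeling off the top production: a $Prod_p$ production is handled by $L_p\subseteq\mathrm{span}(\dfa,p,q)$ plus recursion inside $G_p$; the production $Z_p::=Z_{p^1}Z_{p^3}Z_{p^2}$ is realised by applying the corresponding type-\ref{rule-composite} rule at $(p,q)$ (enabling $(p^1,q),(p^2,q_j),(p^3,q_j)$), recursively realising the three factors in commuting regions, merging, and reading $w=w_1w_3w_2$ along $q\to q_j$ (through the expanded $p^1$), $q_j\to q_X^3$ (through the expanded $p^3$), then a connecting transition on the first letter of $w_2$ and the remainder of the expanded $p^2$; the production $Z_p::=Z_{p^1}C_pZ_{p^2}$ with $C_p\derives Z_{p^3}^{+}$ is analogous, iterating the $p^3$-graft around $q_j$'s loop via $C_p::=C_pC_p$. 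Then $S\derives w$ goes either through $S::=Z_{p^I}$ (invoke the core claim from the initial DFA $\pdfasdfai{I}$) or through $S::=S_C$, $S_C\derives Z_{p^I_1}\cdots Z_{p^I_m}$ (first graft the needed circular start patterns at $q_0$ with $m-1$ applications of the existing-DFA form of a type-\ref{rule-start} rule, then realise each factor by the core claim in its own loop); either way $w\in L(\dfa')\subseteq L(\prs)$.

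The routine part is the per-rule local language accounting, including the effect of the connecting transitions $C$ and the language-preserving minimisation of equivalent exit states. I expect the main obstacles to be (a) getting the span invariant exactly right so that context from an \emph{enclosing} instance does not leak into a sub-instance's span — this is why spans must be read off in the induced sub-automaton, and one must check these local spans really do concatenate along the composition as claimed; and (b) the confluence/commutation step in the second direction, where an arbitrary finite $G$-derivation corresponds to a finite \emph{tree} of graftings (a pattern grafted onto a join state lying inside another instance still being expanded) that must be linearised into a legal PRS derivation — this uses monotonicity of $\imapi{i}$, the overlap structure of enabled instances from Section \ref{infer-join}, and the fact that each side condition $\defined(q_j)\cap\defined(q_0^3)=\emptyset$ stays satisfiable because a grafting only touches its own join state's defined tokens. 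A last subtlety, and the precise place the stated restriction on $\prs$ is used, is that permitting one pattern (resp.\ the start symbol) to mix type-\ref{rule-composite-circular} and type-\ref{rule-composite} expansions (resp.\ circular and non-circular initial patterns) would let a generated DFA interleave loops and optional insertions in ways the productions as defined cannot express; the restriction forbids exactly these configurations.
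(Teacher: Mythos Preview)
Your route is genuinely different from the paper's. For $L(\prs)\subseteq L(G)$ the paper fixes a single word $s\in L(\dfai{m})$, defines a \emph{p-path}---an accepting run through enabled pattern instances---and then projects this run \emph{backwards} along the DFA sequence into ``p-covers'' $\rhoi{m},\rhoi{m-1},\dots,\rhoi{1}$ by collapsing each newly-grafted triple $p^1p^3p^2$ back to $p$. The invariant is that each $\rhoi{i}$ is matched by a partial $G$-derivation tree $\treei{i}$ with $Pats(\rhoi{i})=Leaves(\treei{i})$; at $i=m$ the leaves are $Z_{p_1}\cdots Z_{p_t}$ and each $Z_{p_j}$ derives the corresponding factor $s_j$ of $s$. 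You instead run a \emph{forward} induction on the generating sequence, carrying a language-level invariant (every enabled instance's span sits inside $L_G(Z_p)$). Your version avoids the p-cover bookkeeping---which in the paper requires separate and somewhat intricate handling of circular patterns, repeated loops through the same circular instance, and the type-\ref{rule-start} rules---at the cost of having to pin down ``span'' so that spans of nested instances really do compose; you correctly flag this as obstacle (a). One point worth noting: your decision to take $G_p$ assembled from $G_{p^1},G_{p^2}$ so that $Z_p\Rightarrow Z_{p^1}Z_{p^2}$ is not merely a convenience. The paper leaves $G_p$ unspecified, and with fresh non-terminals the inclusion can fail (after a graft on $p=p^1\compose p^2$ enables $p^1$ and a further rule expands $p^1$, the DFA admits runs that use the graft on $p^1$ but \emph{skip} the graft on $p$; to derive these one needs $Z_p\Rightarrow Z_{p^1}Z_{p^2}$). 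So your ``without loss of generality'' is really fixing an underspecification.

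For $L(G)\subseteq L(\prs)$ the paper says only that ``the converse is also true and can be shown by similar technique'' and gives no argument; your realisation-by-rule-application sketch is therefore strictly more than what the paper offers, and the commutation/linearisation issue you isolate in (b) is exactly the right obstacle. One place your justification is too quick: you say the side conditions $\defined(q_j)\cap\defined(q_0^3)=\emptyset$ stay satisfiable because a graft only touches its own join state. That covers \emph{distinct} join states, but when $C_p\Rightarrow Z_{p^3}Z_{p^{3'}}$ with two different right-hand patterns you must graft both at the \emph{same} $q_j$, and the second graft's side condition now sees the first one's head tokens. You should either argue this never blocks (for instance by assuming, as is implicit in all the paper's examples, that the head-token sets of the various $p^3$'s attached to a given $p$ are pairwise disjoint), or record it as an additional hypothesis.
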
 

The proof is given in Appendix \ref{Sect:AppendixCFG}.

\paragraph{The class of languages expressible by a PRS}
Every RE-Dyck language (Section \ref{Sect:Dyck}) can be expressed by a PRS.  But the converse is not true; an RE-Dyck language requires that any delimiter pair can be embedded in any other delimiter pair while a PRS grammar provides more control over which
delimiters can be embedded in which other delimiters.   For instance, the language L12 of Section \ref{Sect:experimental-langs} contains $2$ pairs of
delimiters and only includes strings in which the first delimiter pair is embedded in the second delimiter pair and vice versa.  L12 is expressible by a PRS but is not a Dyck language.  Hence the class of PRS languages are more expressive than Dyck languages and are
contained in the class of CFLs.  But not every CFL can be expressed by a PRS.  See Appendix \ref{Sect:AppendixPRSLimitation}.  

\paragraph{Succinctness} The construction above does not necessarily yield a minimal CFG $G$ equivalent to $\prs$. For a PRS defining the Dyck language of order 2 - which can be expressed by a CFG with 4 productions and one non-terminal - our construction yields a CFG with 10 non-terminals and 12 productions. 
  
In general, the extra productions can be necessary to provide more control over what delimiter pairs can be nested in other delimiter pairs as described above.   
However, when these productions are not necessary, we can often post-process the generated CFG to remove unnecessary productions.  See Appendix \ref{Sect:AppendixCFGexamples}
for the CFGs generated for the Dyck language of order 2 and for the language of alternating delimiters.

\section{Experimental results}
\label{sect:experiments}
\subsection{Methodology}
We test the algorithm on several PRS-expressible context free languages, attempting to extract them from trained RNNs using the process outlined in Figure \ref{fig:overview}.   
For each language, we create a probabilistic CFG generating it, train an RNN on samples from this grammar, extract a sequence of DFAs from the RNN, and apply our PRS inference algorithm\footnote{The implementation needs some expansions to fully apply to complex multi-composed patterns, but is otherwise complete and works on all languages described here.
}. Finally, we convert the extracted PRS back to a CFG, and compare it to our target CFG.

In all of our experiments, we use a vote-threshold s.t. patterns with less than $2$ votes are not used to form any PRS rules (Section \ref{Sect:IncorrectPats}). 
Using no threshold significantly degraded the results by including too much noise, while higher thresholds often caused us to overlook
correct patterns and rules.  

\subsection{Generating a sequence of DFAs}
\label{Sect:DFAgeneration}
We obtain a sequence of DFAs for a given CFG using only positive samples\cite{Gold67,Angluin80} by training a \emph{language-model RNN} (LM-RNN)
on these samples and then extracting DFAs from it with the aid of the \lstar algorithm \cite{Lstar}, as described in \cite{WeissGoldbergYahav2018}. 
To apply \lstar we must treat the LM-RNN as a binary classifier. We set an `acceptance threshold' $t$ and define the RNN's language as the set of sequences $s$ satisfying:
\begin{enumerate*}
\item the RNN's probability for an end-of-sequence token after $s$ is greater than $t$, and
\item at no point during $s$ does the RNN pass through a token with probability $<t$.
\end{enumerate*} This is identical to the concept of \emph{locally $t$-truncated support} defined in \cite{hewitt-etal-2020-rnns}. (Using the LM-RNN's probability for the entire sequence has the flaw that this decreases for longer sequences.)

To create the samples for the RNNs, we write a weighted version of the CFG, in which each non-terminal is given a probability over its rules. We then take $N$ samples from the weighted CFG according to its distribution, split them into train and validation sets, and train an RNN on the train set until the validation loss stops improving. In our experiments, we used $N=10,000$.
For our languages, we used very small 2-layer LSTMs: hidden dimension 10 and input dimension 4.

In some cases, especially when all of the patterns in the rules are several tokens long, the extraction of \cite{WeissGoldbergYahav2018} terminates too soon: neither \lstar nor the RNN abstraction consider long sequences, and equivalence is reached between the \lstar hypothesis and the RNN abstraction despite neither being equivalent to the 'true' language of the RNN. In these cases we push the extraction a little further using two methods: first, if the RNN abstraction contains only a single state, we make an arbitrary initial refinement by splitting 10 hidden dimensions, and restart the extraction. If this is also not enough, we sample the RNN according to its distribution, in the hope of finding a counterexample to return to \lstar. The latter approach is not ideal: sampling the RNN may return very long sequences, effectively increasing the next DFA by many rule applications.

In other cases, the extraction is long, and slows down as the extracted DFAs grow. We place a time limit of $1,000$ seconds ($\sim17$ minutes) on the extraction.

\subsection{Languages}
\label{Sect:experimental-langs}
We experiment on 15 PRS-expressible languages $L_1-L_{15}$, grouped into $3$ classes: 
\begin{enumerate}
    \item Languages of the form {\tt X$^n$Y$^n$}, for various regular expressions {\tt X} and {\tt Y}. In particular, the languages $L_1$ through $L_6$ are {\tt X$_i^n$Y$_i^n$} for: 
    {\tt (X$_1$,Y$_1$)=(a,b)}, \\
    {\tt (X$_2$,Y$_2$)=(a|b,c|d)}, 
    {\tt (X$_3$,Y$_3$)=(ab|cd,ef|gh)},
    {\tt (X$_4$,Y$_4$)=(ab,cd)},\\
    {\tt (X$_5$,Y$_5$)=(abc,def)}, and
    {\tt (X$_6$,Y$_6$)=(ab|c,de|f)}.
    \item Dyck and RE-Dyck languages, excluding the empty sequence. In particular, languages $L_7$ through $L_9$ are the Dyck languages (excluding $\varepsilon$) of order $2$ through $4$, and $L_{10}$ and $L_{11}$ are RE-Dyck languages of order $1$ with the delimiters {\tt (L$_{10}$,R$_{10}$)=(abcde,vwxyz)} and {\tt (L$_{11}$,R$_{11}$)=(ab|c,de|f)}.
    \item Variations of the Dyck languages, again excluding the empty sequence. $L_{12}$ is the language of alternating single-nested delimiters, generating only sequences of the sort {\tt ([([])])} or {\tt [([])]}. $L_{13}$ and $L_{14}$ are Dyck-1 and Dyck-2 with additional neutral tokens {\tt a,b,c} that may appear multiple times anywhere in the sequence. $L_{15}$ is like $L_{13}$ except that the neutral additions are the token {\tt d} and the sequence {\tt abc}, eg: {\tt (abc()())d} is in $L_{15}$, but {\tt a(bc()())d} is not.
\end{enumerate}

\subsection{Results}
\label{Sect:results}
Table \ref{tab:results} shows the results.  
The 2nd column shows the number of DFAs extracted from the RNN. The 
3rd and 4th columns present the number of patterns found by the algorithm before and after applying vote-thresholding to remove noise. The 5th column gives the minimum and maximum votes received by the final patterns\footnote{We count only patterns introduced as a new pattern $p^3$ in some
$\dfai{i+1}$; if $p^3 = p^4 \compose p^5$, but $p^4$ is not introduced independently as a new pattern, we do not count it.}.
The 6th column notes
whether the algorithm found a correct CFG, according to our manual inspection. For languages where our algorithm only missed or included $1$ or $2$ valid/invalid productions, we label it as partially correct.

\begin{table}[t]
\small
    \centering
    \begin{tabular}{c|c|c|c|c|c||c|c|c|c|c|c}
    \toprule 
      LG   &  DFAs &  Init   & Final &  Min/Max  & CFG &  LG   &  DFAs &  Init   & Final &  Min/Max  & CFG \\
             &       &  Pats   & Pats  &  Votes    & Correct &     &       &  Pats   & Pats  &  Votes   & Correct \\
    \midrule
      $L_1$   & 18  & 1 & 1 & 16/16 & Correct                     & $L_9$    & 30 & 6 & 4 & 5/8 & Correct \\ 
      $L_2$   & 16 & 1 & 1 & 14/14 & Correct                        & $L_{10}$   & 6 & 2 & 1 & 3/3 & Correct \\
      $L_3$   & 14 & 6 & 4 & 2/4 & Incorrect                        & $L_{11}$ & 24 & 6 & 3 & 5/12 & Incorrect  \\
      $L_4$   & 8 & 2 & 1 &  5/5 & Correct                          & $L_{12}$   & 28 &  2 & 2 & 13/13 & Correct  \\
      $L_5$ & 10 & 2 & 1 & 7/7 & Correct                            & $L_{13}$   & 9 & 6 & 1 & 2/2 & Correct \\
      $L_6$ & 22 & 9 & 4 & 3/16 & Incorrect                         & $L_{14}$  & 17 & 5 & 2 & 5/7 & Correct \\
      $L_7$   & 24 & 2 & 2 & 11/11 & Correct                        & $L_{15}$   & 13  & 6 & 4 & 3/6 & Incorrect   \\
      $L_8$   & 22 & 5 & 4 & 2/9 & Partial   \\
      \bottomrule
    \end{tabular}
    \caption{Results of experiments on DFAs extracted from RNNs}
    \label{tab:results}
\end{table}

\paragraph{Alternating Patterns} Our algorithm struggled on the languages $L_3$, $L_6$, and $L_{11}$, which contained patterns whose regular expressions had alternations (such as {\tt ab|cd} in $L_3$, and {\tt ab|c} in $L_6$ and $L_{11}$). 
Investigating their DFA sequences uncovered the that the \lstar extraction had `split' the alternating expressions, adding their parts to the DFAs over multiple iterations. For example, in the sequence generated for $L_3$, {\tt ef} appeared in $\dfai{7}$ without {\tt gh} alongside it. The next DFA corrected this mistake but the inference
algorithm could not piece together these two separate steps into a single rule.   
It will be valuable to expand the algorithm to these cases. 

\paragraph{Simultaneous Applications} 
Originally our algorithm failed to accurately generate $L_{13}$ and $L_{14}$ due to simultaneous rule applications.  However, using the technique described in Section \ref{Sect:deviations} we were able
to correctly infer these grammars.  However, more work is needed to handle simultaneous rule applications in general.   

Additionally, sometimes a very large counterexample was returned to \lstar, creating a large increase in the DFAs: the $9$\nth  iteration of the extraction on $L_3$ introduced almost $30$ new states.
The algorithm does not manage to infer anything meaningful from these nested, simultaneous applications.

\paragraph{Missing Rules} For the Dyck languages $L_7-L_9$, the inference algorithm was mostly successful.  
However, due to the large number of possible delimiter combinations, some patterns and nesting relations did not appear often enough in the DFA sequences. As a result, for $L_8$, some productions were missing in the generated grammar.   $L_8$ also created one incorrect
production due to noise in the sequence (one erroneous pattern was generated two times).  When we raised the threshold to require more than $2$ occurrences to be considered a
valid pattern we no longer generated this incorrect production.

\paragraph{RNN Noise}
In $L_{15}$, the extracted DFAs for some reason always forced that a single character {\tt d} be included between every pair of delimiters. Our inference algorithm of course maintained this peculiarity.  It correctly allowed the allowed optional embedding of ``abc'' strings.  But due to noisy (incorrect) generated DFAs, the 
patterns generated did not maintain balanced parenthesis.

\section{Related work}
\label{Sect:related}

\paragraph{Training RNNs to recognize Dyck Grammars.}
Recently there has been a surge of interest in whether RNNs can learn Dyck languages 
\cite{Bernardy2018,sennhauser-berwick-2018-evaluating,skachkova-etal-2018-closing,YuVukuhn2019}.
While these works report very good results on learning the language for sentences of similar distance and depth as the training set, with the exception of \cite{skachkova-etal-2018-closing}, they report significantly less accuracy 
for out-of-sample sentences.

Sennhauser and Berwick \cite{sennhauser-berwick-2018-evaluating} use LSTMs, and show that in order to keep the error rate with a $5$ percent tolerance, the number of hidden units
must grow exponentially with the distance or depth of the sequences\footnote{However, see \cite{hewitt-etal-2020-rnns} where they show that a Dyck grammar
of $k$ pairs of delimiters that generates sentences of maximum depth $m$, only $3m\ceil{\log{k}}-m$ hidden memory units are required, and show experimental results
that confirm this theoretical bound in practice.}.  They also found that out-of-sample results were not very
good. 
They conclude that LSTMs
cannot learn rules, but rather use statistical approximation.   Bernardy \cite{Bernardy2018} experimented with various RNN architectures. 
When they test their RNNs on strings that are at most double in length of the training set, they found that for out-of-sample strings, the accuracy  
varies from about $60$ to above $90$ percent.
The fact that the LSTM has more difficulty in predicting closing
delimiters in the middle of a sentence than at the end leads Bernardy to conjecture that for closing parenthesis the RNN is using a counting mechanism, but has not truly
learnt the Dyck language (its CFG).  
Skachkova, Trost and Klakow \cite{skachkova-etal-2018-closing} experiment with Ellman-RNN, GRU and LSTM architectures.   They provide a mathematical model for the probability of
a particular symbol in the $i^{th}$ position of a Dyck sentence. 
They experiment with how well the models predict the closing delimiter, which they find varying results
per architecture.  However, for LSTMs, they find nearly perfect accuracy across words with large distances and embedded depth.  
    
Yu, Vu and Kuhn \cite{YuVukuhn2019} compares the three works above and 
argue that the task of predicting a closing bracket of a balanced Dyck word, as performed in \cite{skachkova-etal-2018-closing}, is a poor test for
checking if the RNN learnt the language, as it can be simply computed by a counter.  In contrast, their carefully constructed experiments give
a prefix of a Dyck word and train the RNN to predict the next valid closing bracket. 
They experiment with an LSTM using 4 different models, and show
that the generator-attention model \cite{LuongPM2015} performs the best, and is able to generalize quite well at the tagging task . 
However, when using RNNs to complete the entire Dyck word,
while the generator-attention model does quite well with in-domain tests, it degrades rapidly with out-of-domain tests.
They also conclude that RNNs do not really learn the CFG underlying the Dyck language.  These experimental results are reinforced by the theoretical work
in \cite{hewitt-etal-2020-rnns}.  They remark that no finite precision RNN can learn a Dyck language of unbounded depth, and give precise bounds on 
the memory required to learn a Dyck language of bounded depth.   

In contrast to these works, our research tries to extract the CFG from the RNN.   We discover these rules based upon DFAs synthesized from the RNN
using the algorithm in \cite{WeissGoldbergYahav2018}.  Because we can use a short sequence
of DFAs to extract the rules, and because the first DFAs in the sequence describe Dyck words with increasing but limited distance and depth, we are able to extract the CFG perfectly,
even when the RNN does not generalize well.   Moreover, we show that our approach generalizes to more complex types of delimiters, and to Dyck languages with expressions between delimiters.

\paragraph{Extracting DFAs from RNNs.}
There have been many approaches to extract higher level representations from a neural network (NN) 
to facilitate comprehension and verify correctness.   One of the oldest approaches is to extract
rules from a NN \cite{Thrun95extractingrules,Hailesilassie2016}. In order to model state, there have been various approaches to extract FSA from RNNs \cite{omlin-giles-96,Jacobsson2005,wang2019}.   
We base our work on \cite{WeissGoldbergYahav2018}. Its ability to generate sequences of DFAs that increasingly better approximate the CFL is critical to our
method.   

Unlike DFA extraction, there has been relatively little research on extracting a CFG from an RNN.   One exception is \cite{Sun1997}, 
where they develop a Neural Network Pushdown Automata (NNPDA) framework, a hybrid system augmenting an RNN with external stack memory.  The RNN also reads the top of the stack as added input, optionally pushes to or pops the stack after each new input symbol.  
They show how to extract a Push-down Automaton from a NNPDA, however, their technique relies on the PDA-like structure of the inspected architecture.
In contrast, 
we extract CFGs from RNNs 
without stack augmentation.

\paragraph{Learning CFGs from samples.}
There is a wide body of work on learning CFGs from samples.  An overview is given in \cite{D'Ulizia2011} and a survey of work for grammatical inference applied to 
software engineering tasks can be found in \cite{Stevenson2014}.  

Clark et. al. studies algorithms for learning CFLs given only positive examples
\cite{Gold67}.   In \cite{Clark2007Polynomial},
Clark and Eyraud show how one can learn a subclass of CFLs called {\em CF substitutable} languages.
There are many languages that can be expressed by a PRS but are not substitutable, such as $x^nb^n$.  However, there are also substitutable languages that cannot be expressed by a PRS ($wxw^{R}$ - see Appendix \ref{Sect:AppendixPRSLimitation}).
In \cite{Clark2008APA}, Clark, Eyraud and Habrard present Contextual Binary Feature Grammars.
However, it does not include Dyck languages of arbitrary order.   None of these techniques deal with noise in the data, essential to learning a language from an RNN.  While we have focused on practical learning of CFLs, theoretical limits on learning based upon positive examples is well known; see\cite{Gold67,Angluin80}.

\section{Future Directions}
\label{Sect:future}
Currently, for each experiment, we train the RNN on that language and then apply the PRS inference algorithm
on a single DFA sequence generated from that RNN.   Perhaps the most substantial improvement we can make is to extend our technique to learn from multiple DFA sequences.
We can train multiple RNNs (each one based upon a different architecture if desired) and generate 
DFA sequences for each one.  We can then run the PRS inference algorithm on each of these sequences, and generate a CFG
based upon rules that are found in a significant number of the runs.  This would require care to guarantee that the final rules form a cohesive CFG.  It would also address the issue that not
all rules are expressed in a single DFA sequence, and that some grammars may have rules that are executed only once per word of the language.

Our work generates CFGs for generalized Dyck languages, but it is possible to generalize PRSs  to express a greater range of languages.   Work will be needed to extend the PRS inference algorithm to reconstruct grammars
for all context-free and perhaps even some context-sensitive languages.

\appendix
\section{Observation on PRS-Generated Sequences} \label{sect:appendixObs}

We present and prove an observation on PRS-generated sequences used for deriving the PRS-inference algorithm (Section \ref{Sect:PatDiscovery}).

\begin{lemma}
Let $\explicitpairi{i}$ be a PRS-generated \pairname. Then for every two enabled pattern instances $\phat,\phat'\in\imapi{i}, \phat\neq\phat'$, exactly 2 options are possible: \begin{enumerate*}
    \item every state they share is the initial or exit state (\emph{edge state}) of at least one of them, or
    \item one ($\phat^s$) is contained entirely in the other ($\phat^c$), and $\phat^c$ is a composite pattern with join state $q_j$ such that either $q_j$ is one of $\phat^s$'s edge states, or $q_j$ is not in $\phat^s$ at all.
\end{enumerate*}
\end{lemma}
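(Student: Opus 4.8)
The plan is to argue by induction on the number $i$ of enabled DFAs in the PRS derivation producing $\explicitpairi{i}$, using throughout the first observation of Section~\ref{infer-join} --- that a composite pattern's join state is distinct from both of its edge states --- which is immediate from the composition definitions, since the composing patterns $p^1,p^2$ are non-circular (so each has distinct initial and exit states) and no pattern is empty. The base case $i=1$ is vacuous, as $\imapi{1}$ has at most one element. For the inductive step, recall that $\imapi{i}$ is obtained from $\imapi{i-1}$ either by a rule of type~\ref{rule-start}, which adds at most the single instance $(\pinit,q_0)$ based at the initial state, or by a rule of type~\ref{rule-composite-circular} or~\ref{rule-composite}, which selects an enabled composite instance $\phat=\phat^1\odot\phat^2$ with join state $q_j$ and adds, among $\phat^1$ (based at $q$), $\phat^2$ (based at $q_j$) and the fresh graft $\phat^3$ (based at $q_j$), exactly those that are composite, while leaving $\phat$ itself enabled. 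Since the step only adds states and transitions, every pre-existing instance keeps its state and transition sets and hence, by the induction hypothesis, its relationship with every other pre-existing instance; so it suffices to check each pair in which at least one instance is newly enabled.

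I would first dispatch the ``easy'' pairs. For $\phat^1,\phat^2$: by the definition of serial/circular composition, $Q^{\phat^1}\cup Q^{\phat^2}=Q^{\phat}$ and $Q^{\phat^1}\cap Q^{\phat^2}$ consists only of states that are edge states of both (the merged exit-of-$\phat^1$/initial-of-$\phat^2$, and in the circular case also the merged initial-of-$\phat^1$/exit-of-$\phat^2$), so the first option holds. For $(\phat,\phat^1)$ and $(\phat,\phat^2)$: here $\phat^1,\phat^2\subseteq\phat$, $\phat$ is composite with join state $q_j$, and $q_j$ is respectively the exit state of $\phat^1$ and the initial state of $\phat^2$ --- an edge state of the contained pattern --- so the second option holds. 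For any pair involving $\phat^3$, or involving the instance $(\pinit,q_0)$ added by a rule of type~\ref{rule-start}: all internal states of the freshly grafted pattern are new, and its only old state is its graft point, which is one of its edge states (its initial state, and, when it is circular, also its exit state); hence any state it shares with another instance is that edge state, giving the first option, while neither direction of containment is possible since the fresh instance has at most one old state and its new internal states lie in no older instance.

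The principal pairs are $(\phat^1,\phat')$ and $(\phat^2,\phat')$ for a pre-existing enabled instance $\phat'\neq\phat$. I would invoke the induction hypothesis on $(\phat,\phat')$ and transfer it down to $\phat^1,\phat^2\subseteq\phat$ using the bookkeeping that every edge state of $\phat$ lying in $\phat^1$ (resp.\ $\phat^2$) is already an edge state of $\phat^1$ (resp.\ $\phat^2$): $\phat$'s initial state is $\phat^1$'s initial state, $\phat$'s exit state is $\phat^2$'s exit state, and in the circular case these coincide. Thus if the hypothesis gives the first option for $(\phat,\phat')$ it carries over verbatim; if it gives the second option with $\phat'$ as the container, then $\phat'$'s join state --- being an edge state of, or absent from, $\phat$ --- is likewise an edge state of, or absent from, $\phat^1$ (resp.\ $\phat^2$), and the second option is inherited. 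The one delicate configuration is the second option with $\phat$ as the container: I must then show that a pre-existing enabled composite $\phat'\subseteq\phat$ is in fact contained in $\phat^1$ or in $\phat^2$, with the containing one of $p^1,p^2$ itself composite so that the second option is available. I expect this to be the crux of the proof. The idea is that $q_j$, being $\phat$'s join state, acts as a cut point of $\phat$ (together with the shared initial state in the circular case) separating the interiors of $\phat^1$ and $\phat^2$, so a connected $\phat'$ missing $q_j$ lies wholly on one side, and when $q_j$ is an edge state of $\phat'$ the side is fixed by whether $q_j$ is $\phat'$'s initial or exit state; to see that this side is a composite, I would run a secondary induction on the order in which instances were enabled, observing that a pre-existing enabled composite inside $\phat$ cannot have been introduced as a fresh graft (its interior would have to be new at a moment when it already lies inside $\phat$'s interior), hence was revealed as a composing part of an earlier-expanded composite, and unwinding this chain places $\phat'$ at a node of $\phat$'s own composition tree. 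Finally I would remark that the minimization merging the exit states of several type-\ref{rule-composite} grafts onto a common instance only identifies states serving as a common edge state for all those grafts, and so leaves the entire case analysis intact.
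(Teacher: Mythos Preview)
Your inductive skeleton, the treatment of fresh grafts, and the transfer of options 1 and 2 from $(\phat,\phat')$ down to $(\phat^k,\phat')$ when $\phat'$ is the container all match the paper's proof essentially line for line. The divergence is exactly at what you call the ``delicate configuration'': option 2 holding for $(\phat,\phat')$ with $\phat$ as the container.

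Here the paper takes a shortcut you do not see. Rather than trying to push $\phat'$ into one side of the decomposition and then argue about join states via a secondary induction, the paper observes that this configuration is \emph{vacuous}: because $\phat^1$ (or $\phat^2$) is being newly enabled, no rule has ever been applied to $\phat$ before this step --- any earlier application would already have placed $(p^1,q)$ in $\imapi{i}$ --- and therefore no pre-existing enabled instance can sit strictly inside $\phat$. Every enabled instance properly contained in $\phat$ would have to arise from an earlier splitting of $\phat$ or of one of its descendants, and none has occurred. With this observation, option 2 for $(\phat,\phat')$ forces $\phat'$ to be the container, and the transfer argument you already wrote finishes the proof.

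Your proposed secondary induction is thus unnecessary, and it is also the point where your sketch is weakest: you do not establish that the chain of ``earlier-expanded composites'' actually terminates at $\phat$ (rather than at some other instance merely overlapping $\phat$), nor that the join-state condition for option 2 transfers correctly to $\phat^k$ once you land there. The paper's vacuity argument sidesteps all of this.

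Finally, your closing remark about minimization is extraneous. The formal rule applications (Definitions~\ref{defn:start-rule}--\ref{defn:acyclic-rule}) never merge states; the minimization mentioned in the paper is only a feature of the observed \lstar output, not of PRS-generated {\pairname}s, so it plays no role in the lemma.
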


\begin{proof}
We prove by induction. 
For $\explicitpairi{1}$, $|\imapi{1}|\leq 1$ and the lemma holds vacuously. We now assume it is true for $\explicitpairi{i}$.

Applying a rule of type \ref{rule-start} adds only one new instance $\phat^I$ to $\imapi{i+1}$, which shares only its initial state with the existing patterns, and so option $1$ holds. 

Rules of type \ref{rule-composite-circular} and \ref{rule-composite-serial} add up to three new enabled instances, $\phat^1,\phat^2$, and $\phat^3$, to $\imapi{i+1}$. $\phat^3$ only shares its edge states with $\dfai{i}$, and so option (1) holds between $\phat^3$ and all existing instances $\phat'\in\imapi{i}$, as well as the new ones $\phat^1$ and $\phat^2$ if they are added (as their states are already contained in $\dfai{i}$). 

 We now consider the case where $\phat^1$ and $\phat^2$ are also newly added (i.e. $\phat^1,\phat^2\notin\imapi{i}$). We consider a pair $\phat^i,\phat'$ where $i\in\{1,2\}$. As $\phat^1$ and $\phat^2$ only share their join states with each other, and both are completely contained in $\phat$ such that $\phat$'s join state is one of their edge states, the lemma holds for each of $\phat'\in\{\phat^1,\phat^2,\phat\}$. We move to $\phat'\neq \phat^1,\phat^2,\phat$. Note that ($i$) $\phat'$ cannot be contained in $\phat$, as we are only now splitting $\phat$ into its composing instances, and ($ii$), if $\phat$ shares any of its edge states with $\phat^i$, then it must also be an edge state of $\phat^i$ (by construction of composition).

As $\phat^i$ is contained in $\phat$, the only states that can be shared by $\phat^i$ and $\phat'$ are those shared by $\phat$ and $\phat'$.
If $\phat,\phat'$ satisfy option 1, i.e., they only share edge states, then this means any states shared by $\phat'$ and $\phat^i$ are edge states of $\phat'$ or $\phat$. Clearly, $\phat'$ edge states continue to be $\phat'$ edge states. As for each of $\phat$'s edge states, by ($ii$), it is either not in $\phat^i$, or necessarily an edge state of $\phat^i$. Hence, if $\phat,\phat'$ satisfy option $1$, then $\phat^i,\phat'$ do too. 

Otherwise, by the assumption on $\explicitpairi{i}$, option $2$ holds between $\phat'$ and $\phat$, and from ($i$) $\phat'$ is the containing instance. As $\phat^i$ composes $\phat$, then $\phat'$ also contains $\phat^i$. Moreover, by definition of option $2$, the join state of $\phat'$ is either one of $\phat$'s edge states or not in $\phat$ at all, and so from $(ii)$ the same holds for $\phat^i$. 
\end{proof}

\section{Correctness of the Inference Algorithm}
\label{Sect:appendixCorrect}

\newtheorem*{lemma:minimal-generator}{Lemma \ref{lemma:minimal-generator}}
\begin{lemma:minimal-generator}
Given a finite sequence of DFAs, the minimal generator of that sequence, if it exists, is unique.
\end{lemma:minimal-generator}

{\bf Proof:}
Say that there exists two MGs, $\prs_1 = \langle \Sigma^1, P^1, P_c^1, R^1\rangle$ and $\prs_2 =\langle \Sigma^2, P^2, P_c^2, R^2\rangle$ that generate the sequence $\dfai{1},\dfai{2},\cdots,\dfai{n}$.  
Certainly $\Sigma^1 =\Sigma^2=\bigcup_{i\in[n]}\Sigma^{A_i}$.   

We show that $R^1 = R^2$.
Say that the first time MG1 and MG2 differ from one another is in explaining which rule is used when expanding from
$\dfai{i}$ to $\dfai{i+1}$.   Since MG1 and MG2 agree on all rules used to expand the sequence prior to $\dfai{i+1}$, they agree on the set of patterns enabled in $\dfai{i}$. 
If this expansion is adding a pattern $p_3$ originating at the
initial state of the DFA, then it can only be explained by a single rule $\bot \rulearrow p_3$, and so the explanation of MG1 and MG2 is identical.
Hence the expansion must be created by a rule of type \ref{rule-composite-circular} or \ref{rule-composite-serial}.   
Since the newly added pattern instance $\phat^3$ is is uniquely identifiable in $\dfai{i+1}$, 
$\prs_1$ and $\prs_2$ must agree on the pattern $p^3$ that appears on the RHS of the rule explaining this expansion.  $\phat^3$ is 
inserted at some state $q_j$ of $\dfai{i}$.  $q_j$ must be  the join state of an enabled pattern instance $\phat$ in $\dfai{i}$. But this join state uniquely identifies that pattern: as noted in Section \ref{Sect:rule-inference}, no two enabled patterns in a \pairname \ share a join state.
Hence $\prs_1$ and $\prs_2$ must agree that the pattern $p =  p^1 \compose p^2$
is the LHS of the rule, and they therefore agree that the rule is $p \rulearrow_s (p^1 \compose p^2) \connect p^3$, if $p^3$ is non-circular, or
 $p \rulearrow_c (p^1 \odot p^2) \connect p^3$ if $p_3$ is circular.  Hence  $R^1 = R^2$.
 
Since $\prs_1$ ($\prs_2$) is an MG, it must be that $p \in P^1$ ($p \in P^2$) iff $p$ appears in a rule in $R^1$ ($R^2$).  Since $R^1 = R^2$, $P^1=P^2$.  
Furthermore, a pattern $p \in P_c$ iff it appears on the LHS of a rule.  Therefore $P_c^1 = P_c^2$. \qed

\newtheorem*{thm:inference-correctness}{Theorem \ref{thm:inference-correctness}}
\begin{thm:inference-correctness}
Let  $\dfa_1,\dfa_2,...\dfa_n$ be a finite sequence of DFAs that has a minimal generator $\prs$.   Then the PRS Inference Algorithm will discover $\prs$.
\end{thm:inference-correctness}

{\bf Proof:}
This proof mimics the proof in the Lemma above.  In this case $\prs_1 =\langle \Sigma^1, P^1, P_c^1, R^1\rangle$ is the MG for this sequence and
$\prs_2 =\langle \Sigma^2, P^2, P_c^2, R^2\rangle$ is the PRS discovered by the PRS inference algorithm.  

We need to show that the PRS inference algorithm faithfully follows the steps above for $\prs_2$.  This straightforward by comparing the steps of the inference algorithm
with the steps for $\prs_2$.   One subtlety is to show that the PRS inference algorithm correctly identifies the new pattern $\phat^3$ in $\dfai{i+1}$ 
extending $\dfai{i}$.  The algorithm easily finds all the newly inserted states and transitions in
$\dfai{i+1}$.   All of the states, together with the initial state, must belong to the new pattern.  However not all transitions necessarily belong to the pattern.  
The Exit State Discovery algorithm of Section
\ref{Sect:PatDiscovery} correctly differentiates between new transitions that are part of the inserted pattern and those that are 
{\it connecting transitions} (The set $C$ of Definition \ref{defn:acyclic-rule}).   Hence the algorithm correctly finds the new pattern in $\dfai{i+1}$. \qed

\section{The expressibility of a PRS}
\label{Sect:AppendixCFG}
We present a proof to Theorem \ref{theorem:CFG} showing that the CFG created from a PRS expresses the same language.   

\newtheorem*{thm:CFG}{Theorem \ref{theorem:CFG}}
\begin{thm:CFG}
Let $G$ be the CFG constructed from $\prs$ by the procedure given in Section \ref{Sect:expressibility}.  Then $L(\prs) = L(G)$.
\end{thm:CFG}

{\bf Proof:}
Let $s \in L(\prs)$.  Then there exists a sequence of DFAs $\dfai{1} \cdots \dfai{m}$ generated by $\prs$ s.t. $s \in L(\dfai{m})$.  We will show that $s\in L(G)$. 
W.l.g. we assume that each DFA in the sequence is necessary; i.e., if the rule application to $\dfai{i}$ creating $\dfai{i+1}$ were absent, then $s \notin L(\dfai{m})$.  
We will use the notation $\phat$ to refer to a specific instance of a pattern $p$ in $\dfai{i}$ for some $i$ ($1 \le i \le m$),
and we adopt from Section \ref{sect:PRS} the notion of \emph{enabled} pattern instances.   So, for instance, if we 
apply a rule $p \rulearrow_s (p^1 \compose p^2) \connect p^3$, where $p = p_1 \compose p_2$, to an instance of $\phat$ in 
$\dfai{i}$, then $\dfai{i+1}$ will contain a new path through the enabled pattern instances $\phati{1},\phati{2}$ and $\phati{3}$.

A {\em p-path} (short for {\em pattern-path}) through a DFA $\dfai{i}$ is a path 
$\rho = q_0 \rightarrow^{p_1} q_1 \rightarrow^{p_2} \cdots q_{t-1} \rightarrow^{p_t} q_t$, where $q_0$ and
$q_t$ are the initial and final states of $\dfai{i}$ respectively, and for each transition $q_j \rightarrow^{p_{j+1}} q_{j+1}$,
$q_j$ ($0 \le j \le t-1$) is the initial state of an enabled pattern instance of type $p_{j+1}$ and $q_{j+1}$ is the final state of that pattern instance.
A state may appear multiple times in the path if there is a cycle in the DFA and that state is traversed multiple times.  
If $\phat$ is an enabled circular pattern and the path contains a cycle that traverses that instance of $p$, and only that instance, multiple times consecutively, it is only represented
once in the path, since that cycle is completely contained within that pattern; a p-path cannot contain consecutive self-loops 
$q_j \rightarrow^{p} q_{j} \rightarrow^{p} q_{j}$.  
$Pats(\rho) = p_{1} p_{2} \cdots p_{t}$, the instances of the patterns traversed along the path $\rho$.

We say that a p-path 
$\rho = q_0 \rightarrow^{p_1} q_1 \rightarrow^{p_2} \cdots q_{t-1} \rightarrow^{p_t} q_t$ through $\dfai{m}$
is an {\em acceptor} (of $s$) iff $s = s_1 \cdots s_t$ and $s_i \in L(p_i)$ for all $i$ ($1 \le i \le t$).
DFAs earlier in the sequence are not acceptors as they contain patterns that have not yet been expanded.   But
we can ``project'' the final p-path onto a p-path in an earlier DFA.  We do so with the following definition of
a {\em p-cover}:
\begin{itemize}
    \item If a path $\rho$ is an acceptor, then it is a p-cover.
    
    \item Let $p$ be a pattern and let $\dfai{i+1}$ be obtained from $\dfai{i}$ by application of the rule $p \rulearrow_s (p^1 \compose p^2) \connect p^3$ or
    $p \rulearrow_c (p^1 \odot p^2) \connect p^3$ to 
    $\phat$ in $\dfai{i}$ obtaining a sub-path $q_1 \rightarrow^{p_1} q_3 \rightarrow^{p_3} q_4 \rightarrow^{p_2} q_2$ through instances $\phati{1},\phati{2}$ and $\phati{3}$.  Furthermore, say that the p-path $\rhoi{i+1}$ through $\dfai{i+1}$ is a p-cover.
    Then the path $\rhoi{i}$ through $\dfai{i}$ is p-cover, where $\rhoi{i}$ is obtained from $\rhoi{i+1}$ by replacing each occurrence
    of $q_1 \rightarrow^{p_1} q_3 \rightarrow^{p_3} q_4 \rightarrow^{p_2} q_2$ in $\rhoi{i+1}$ traversing $\phati{1},\phati{3}$ and $\phati{2}$ by the single transition 
    $q_1 \rightarrow^p q_2$ traversing $\phat$ in $\rhoi{i}$.  (If $p$ is circular then $q_1 = q_2$).  
    If this results in consecutive self loops $q_1 \rightarrow^{p} q_1 \rightarrow^{p} q_1$ we collapse them into a single cycle, $q_1 \rightarrow^{p} q_1$.
   
   \item Let $\dfai{i+1}$ be obtained by applying a rule $\perp \rulearrow \pinit$ to $\dfai{i}$ obtaining 
   an instance of $\phat^{I}$, where $\pinit$ is a circular pattern (Defn. \ref{defn:start-rule-existing-dfa}).
   Furthermore, say that the p-path $\rhoi{i+1}$ through $\dfai{i+1}$ is a p-cover.   Then the path $\rhoi{i}$ through $\dfai{i}$ is 
   p-cover, where $\rhoi{i}$ is obtained from $\rhoi{i+1}$ by replacing each occurrence of $q_0 \rightarrow^{\pinit} q_0$
   traversing $\phat^{I}$ by the single state $q_0$.   
  
\end{itemize}
Hence we can associate with each $\dfai{i}, 1 \le i \le m$ a unique p-cover $\rhoi{i}$.

Let $\tree$ be a partial derivation tree for the CFG $G$, where every branch of the tree terminates with a non-terminal
$Z_{p}$ for some pattern $p$.  We write $\hat{Z_{p}}$ for a particular instance of $Z_{p}$ in $\tree$. 
$Leaves(\tree)$ is the list of patterns obtained by concatenating all the leaves 
(left-to-right) in $\tree$ and replacing each leaf $Z_{p_k}$ by the pattern $p_k$.

We claim that for each $\dfai{i}$ with p-cover $\rhoi{i}$ there exists a partial derivation tree $\treei{i}$ such that 
$Pats(\rhoi{i}) = Leaves(\treei{i})$.   We show this by induction.  

For the base case, consider $\dfai{1}$, which is formed by application of a rule $\bot \rulearrow p^I$.   By construction
of $G$, there exists a production $S ::= Z_{p^I}$.  
$\rhoi{1} = s_0 \rightarrow^{p^I} s_f$, where $S_0$ and $s_f$ are the initial and final states of $\pinit$ respectively, and let $\treei{1}$ be the tree formed by application of the production $S ::= Z_{p^I}$
creating the instance $\hat{Z_{p^I}}$.   Hence $Pats(\rhoi{1}) = \pinit = Leaves(\treei{1})$. 

For the inductive step assume that for $\dfai{i}$ there exists $\treei{i}$ s.t. $Pats(\rhoi{i}) = Leaves(\treei{i})$.
Say that $\dfai{i+1}$ is formed from $\dfai{i}$ by applying the rule
$p \rulearrow_c (p^1 \odot p^2) \connect p^3$ (of type \ref{rule-composite-circular})
or $p \rulearrow_s (p^1 \compose p^2) \connect p^3$ (of type \ref{rule-composite-serial})  
to an instance $\phat$ of $p$ in $\dfai{i}$,
where the initial state of $\phat$ is $q_1$ and its final state is $q_2$ ($q_1 = q_2$ if $p$ is circular) and there
is a sub-path in $\dfai{i}$ of the form $q_1 \rightarrow^p q_2$. 
After applying this rule there is
an additional sub-path $q_1 \rightarrow^{p_1} q_3 \rightarrow^{p_3} q_4 \rightarrow^{p_2} q_2$ in $\dfai{i+1}$ traversing $\phati{1},\phati{3}$ and $\phati{2}$.  
We consider two cases:

Case 1.  $p$ is non-circular.  
The sub-path $q_1 \rightarrow^p q_2$ may appear multiple times in $\rhoi{i}$ even though $p$ is non-circular, since it may be part of a larger cycle.
Consider one of these instances where $q_1 \rightarrow^p q_2$  gets replaced by $q_1 \rightarrow^{p_1} q_3 \rightarrow^{p_3} q_4 \rightarrow^{p_2} q_2$ in $\rhoi{i+1}$.  
Say that this instance of $\phat$ is represented by pattern $p$ at position $u$ in $Pats(\rhoi{i})$. In $\rhoi{i+1}$, the sub-list of patterns $p_1,p_3,p_2$ will replace $p$ at that position (position $u$).
By induction there is a pattern $p$ in $Leaves(\treei{i})$ at position $u$ and let $\zphat$ be the non-terminal instance
in $\treei{i}$ corresponding to that pattern $p$.   If the rule being applied is of type \ref{rule-composite-serial} then, by construction of $G$, there exists a production $Z_p ::= Z_{p_1} Z_{p_3} Z_{p_2}$.  
We produce $\treei{i+1}$ by extending $\treei{i}$ at that instance of $Z_p$ by applying that production to 
$\zphat$.  If the rule is of type \ref{rule-composite-circular}, then we produce $\treei{i+1}$ by extending $\treei{i}$ 
at that instance of $Z_p$ by applying the productions $Z_p ::= Z_{p_1} C_p Z_{p_2}$ and $C_p ::= Z_{p_3}$, which
exist by the construction of $G$.   Hence both $Pats(\rhoi{i+1})$ and $Leaves(\treei{i+1})$ 
will replace $p$ at position $u$ by $p_1,p_3,p_2$.  We do this for each traversal of $\phat$ in $\rhoi{i}$ that gets 
replaced in $\rhoi{i+1}$ by the traversal of $\phati{1}, \phati{3}$, and $\phati{2}$.  
By doing so, $Pats(\rhoi{i+1}) = Leaves(\treei{i+1})$.

Case 2: $p$ is circular.  
This is similar to the previous case except this time, since $p$ is circular, we may need to replace a single sub-path
$q_1 \rightarrow^{p} q_{1}$ corresponding to an instance of $\phat$ in $\rhoi{i}$ by multiple explicit cycles
as defined by $\rhoi{i+1}$.  Each cycle will either traverse $q_1 \rightarrow^{p} q_{1}$ 
or the longer sub-path $q_1 \rightarrow^{p_1} q_3 \rightarrow^{p_3} q_4 \rightarrow^{p_2} q_1$.
 
Say that there exists an instance $\phat$ represented by pattern $p$ at position $u$ in $Pats(\rhoi{i})$
that gets replaced in $\rhoi{i+1}$ by explicit cycles; i.e., $\rhoi{i+1}$ 
replaces $q_1 \rightarrow^{p} q_{1}$ traversing $\phat$ in $\rhoi{i}$ with a new sub-path $\sigma$ in $\rhoi{i+1}$ containing $x$ cycles $q_1 \rightarrow^{p_1} q_3 \rightarrow^{p_3} q_4 \rightarrow^{p_2} q_1$ interspersed with $y$ cycles $q_1 \rightarrow^{p} q_{1}$, where $p= p_1 \compose_c p_2$.  (Per definition of a p-path, there cannot be two consecutive instances of these latter cycles).
Hence in total $\sigma$ may enter and leave $q_1$ a total of $z = x + y$ times. 
By induction there is a pattern $p$ in $Leaves(\treei{i})$ at position $u$ and let $\zphat$ be the non-terminal instance
in $\treei{i}$ corresponding to that pattern $p$.   By construction of $G$, since $p$ is circular, the parent of 
$\zphat$ is an instance $\hat{C_{p'}}$ of the non-terminal $C_{p'}$ for some pattern $p'$ and there exists productions 
$C_{p'} ::= C_{p'} C_{p'}$, and $C_{p'} ::= Z_{p}$.  Using these productions we replace this single instance $\hat{C_{p'}}$ by $z$ copies of $C_{p'}$.  
If the $j^{th}$ cycle of $\sigma$ is $q_1 \rightarrow^{p} q_{1}$ then we have the 
$j^{th}$ instance of $C_{p'}$ derive $Z_p$ without any further derivations.   If the $j^{th}$ cycle is $q_1 \rightarrow^{p_1} q_3 \rightarrow^{p_3} q_4 \rightarrow^{p_2} q_1$, then we also have the 
$j^{th}$ instance of $C_{p'}$ derive $Z_p$.   However, if the rule being applied is of type \ref{rule-composite-serial} then that instance of $Z_p$ derives $Z_{p_1} Z_{p_3} Z_{p_2}$.  
If it is of type \ref{rule-composite-circular} then that instance of $Z_p$ derives $Z_{p_1} C_{p} Z_{p_2}$ and $C_p$ derives $Z_{p_3}$.  
Hence both $Pats(\rhoi{i})$ and $Leaves(\treei{i})$ 
will replace $p$ at position $u$ by $x$ copies of $p_1,p_3,p_2$ intermixed with $y$ copies of $p$.  We do this for each traversal of $\phat$ in $\rhoi{i}$ that gets expanded in $\rhoi{i+1}$ by application of this rule.    
By doing so, $Pats(\rhoi{i+1}) = Leaves(\treei{i+1})$.  

To complete the inductive step, we need to consider the case when $\dfai{i+1}$ is formed from $\dfai{i}$
by applying a rule $\perp \rulearrow p^{I'}$, where $p^{I'}$ is circular, per Defn. \ref{defn:start-rule-existing-dfa}. 
This will insert
$p^{I'}$ into $Pats(\rhoi{i+1})$ at a point when $\rhoi{i}$ is at the initial state $q_0$.  Say that there exists
a sub-path $\sigma = q_0 \rightarrow^{p_1} q_1 \rightarrow^{p_2} \cdots q_e \rightarrow^{p_e} q_0$ in $\rhoi{i}$.   Then 
the application of this rule may add the sub-path $q_0 \rightarrow^{p^{I'}} q_0$ either at the beginning or end of
$\sigma$ in $\rhoi{i+1}$.  W.l.g. assume it gets asserted at the end of this sub-path, and $p_e$ 
occurs at position $u$. Then $Pats(\rhoi{i+1})$ will extend $Pats(\rhoi{i})$
by inserting $p^{I'}$ at position $u+1$ in $\rhoi{i}$.   Since $\sigma$ is a cycle, starting and ending at $q_0$, there must be an instance $\hat{C_S}$ of 
$C_S$ in $\treei{i}$ where $C_S$ is derived by one or more productions of the form
$S::=C_S$ and $C_S ::= C_S \ C_S$. Furthermore, $\hat{C_S}$ derives a sub-tree $T$ s.t. $Leaves(T) = Pats(\sigma)$.  
By construction of $G$, there exists a production $C_S ::= C_{p_{I}'}$.  
We add the production $C_S ::= C_S \ C_S$ to $\hat{C_S}$ so that the first child $C_S$ derives $T$ as in $\treei{i}$.
At the second instance we apply the production $C_S ::= C_{p_{I}'}$.  Hence $p_{I}'$ will appear at position $u+1$ in
$\treei{i+1}$.  We repeat this for each cycle involving $q_0$ in $\rhoi{i}$ that gets extended by the pattern $p^{I'}$ in 
$\rhoi{i+1}$.  By doing so, $Pats(\rhoi{i+1}) = Leaves(\treei{i+1})$.  A similar argument holds if $p^{I'}$ is added
to the first position in $Pats(\rhoi{i+1})$.

Hence we have shown that  $Pats(\rhoi{m}) = Leaves(\treei{m})$. Let $Pats(\rhoi{m}) = p_1 \cdots p_t$.
Since $\rhoi{m}$ is an acceptor for $s$, it must be that there exists $s_j \in \Sigma^+$ ($1 \le j \le t$) s.t. 
$s_j \in L(p_j)$ and $s = s_1 \cdots s_t$.   But since $Leaves(\treei{m}) = Z_{p_1} \cdots Z_{p_t}$ and each
$Z_{p_j}$ can derive $s_j$, we can complete the derivation of $\treei{m}$ to derive $s$.
This shows that $s \in L(PR) \implies s \in L(G)$.  The converse is also true and can be shown by similar technique so
we leave the proof to the reader. \qed

\subsection{Constructing a CFG from an unrestricted PRS}
\label{Sect:AppendixGeneralizedCfg}
The construction of Section \ref{Sect:expressibility} assumed a restriction that a pattern $p$ cannot appear on the LHS
of rules of type \ref{rule-composite-circular} and of type \ref{rule-composite-serial}.   I.e., we cannot have two rules of the form
$p \rulearrow_c (p^1 \odot p^2) \connect {p'}^3$ and $p \rulearrow_s (p^1 \compose p^2) \connect p^3$.
If we were to allow both of these rules then one could construct a path through a DFA instance that first
traverses an instance of $p^1$, then traverses instance of the circular pattern ${p'}^3$ any number of times, then traverses an
instance of $p^3$, and then traverses $p^2$.   However the current grammar does not allow such constructions;   
the non-terminal $Z_p$  can  {\em either} derive $Z_{p_1}$ followed by $Z_{p_3}$
followed by $Z_{p_2}$ or, in place of $Z_{p_3}$, any number of instances of $C_p$ that in turn derives $Z_{p_3'}$.

Hence to remove this restriction, we modify the constructed CFG.  Following Section \ref{Sect:expressibility}, for
every pattern $p \in P$, $G_p$ is the CFG with Start symbol $Z_p$ and non-terminals $N_p$.
$P_Y$ are the patterns appearing on the LHS of some rule of type \ref{rule-composite-circular}.
Given the PRS $\prs = \langle\Sigma, P, P_C, R\rangle$ we create a CFG $G=(\Sigma,N,S,Prod)$, where 
$N = \{ S, C_{S}, E_{S}\} \bigcup\limits_{p \in P} \{N_p, E_p\} \bigcup\limits_{p \in P_Y} \{C_p\}$.

Create the productions $S ::= E_S$, $S::= C_S E_S$ and $C_{S} ::= C_{S} C_{S}$.   Let $\bot \rulearrow p^I$ be a rule in $\prs$.  
Create the production $E_S ::= Z_{p^I}$.  If $p^I$ is circular, create the additional production $C_{S} ::= Z_{p^I}$.

For each rule $p \rulearrow_c (p^1 \odot p^2) \connect p^3$ or
$p \rulearrow_s (p^1 \compose p^2) \connect p^3$ create the productions
$Z_p \rulearrow Z_{p_1} E_{p} Z_{p_2}$ and $E_{p} ::= Z_{p_3}$.  
For each rule  $p \rulearrow_c (p^1 \odot p^2) \connect p^3$ create the additional productions 
$Z_p ::= Z_{p_1} C_p E_{p} Z_{p_2}$, $C_p ::= C_p C_p$, and $C_p ::= Z_{p_3}$. 
Let $Prod'$ be the all the productions defined by the process just given.  $Prod = \{ \bigcup\limits_{p \in P} Prod_p \} \cup Prod'$.   

\subsection{Example of a CFG generated from a PRS}
\label{Sect:AppendixCFGexamples}
The following is the CFG generated for the Dyck Language of order 2 ($L_7$ of Section \ref{Sect:experimental-langs})\footnote{As previously noted,
the terminals, in this case ``(",``)", ``[",``]", are actually represented as base patterns.}.
\begin{verbatim}
S ::= SC
SC ::= SC SC | P1 | P2
P1::= ( P1C )
P1C ::= P1C  P1C | P1 | P2
P2::= [ P2C ]
P2C ::= P2C  P2C | P1 | P2
\end{verbatim}

As remarked in Section \ref{Sect:expressibility}, the usual CFG for Dyck languages collapses all of these non-terminals into a single one.
However, sometimes the extra non-terminals generated by the algorithm are necessary, as illustrated by the following CFG for alternating delimiters ($L_{12}$ of Section \ref{Sect:experimental-langs}) generated by the algorithm.
\begin{verbatim}
S ::= P1 | P2
P1::= ( P2 )
P2::= [ P1 ]
\end{verbatim}

\subsection{Limitations on the expressibility of a PRS}
\label{Sect:AppendixPRSLimitation}
Not every CFL is expressible by a PRS. Consider the language $L_{axb} = \{a^ixb^i : i \ge 0\}$.   
Assume there exists a PRS $\prs$ s.t. $L(\prs) = L_{axb}$.  $\prs$ contains a finite number of initial rules, and each 
initial rule must contain a finite number of straight (non-branching) paths from the start state $q_0$ to the final state $q_f$ (otherwise, if two paths recognizing two distinct strings shared a state, 
it is easy to see it would accept a string not in $L_{axb}$).   Therefore there must be at least one pattern $p^I$ s.t.
there exists a rule $\bot \rightarrow p^I$ and $p^I$ appears on the LHS of some rule.   In particular, there is a path $\alpha$
in $p^I$ from $q_0$ to $q_f$ that recognizes $a^jxb^j$ for some $j>0$.  Let $\alpha = \alpha_1 \alpha_2$ with the join state 
in between $\alpha_1$ and $\alpha_2$.   After applying this rule to the join state of $p^I$, a new path is created of the form
$\alpha_1 \beta \alpha_2$, recognizing $a^kxb^k$ for $k >j$.   Assume that $x$ occurs in $\alpha_1$.   Then $\alpha_1 \beta \alpha_2$
has a prefix $a^jx$, and cannot recognize $a^kxb^k$.   A similar argument holds if $\alpha_2$ contains $x$.
Hence $L_{axb}$ cannot be expressed by a PRS, even though $L_{ab} =\{a^ib^i : i \ge 0\}$
and $L_{ab} \cup L_{axb}$ can be expressed by a PRS.  
 
 We also suspect that non-deterministic CFLs are not expressible by a PRS. This is because the PRS definition does not allow the generation of non-deterministic finite automata (NDFAs). For example, consider the following PRS:  let $p^0$ and $p^1$ be the patterns that accept the characters
``0" and ``1" respectively, and $p^{00} = p^0 \compose p^0$ and $p^{11} = p^1 \compose p^1$.
Let $R_{pal}$ consist of the rules $\bot\ \rulearrow \ p^{00}$, $\bot \ \rulearrow \ p^{11}$, 
$p^{00} \ \rulearrow_s \ (p^0 \compose p^0) \connect p^{00}$, 
$p^{00} \ \rulearrow_s \ (p^0 \compose p^0) \connect p^{11}$, 
$p^{11} \ \rulearrow_s \ (p^1 \compose p^1) \compose p^{00}$, and 
$p^{11} \ \rulearrow_s \ (p^1 \compose p^1) \connect p^{11}$. 
At first, it seems that $L(R_{pal})$ is exactly the (non-deterministic) CFL of even-length palindromes over the alphabet $\{0,1\}$.  However, this is not actually the case: the rules $p^{00} \ \rulearrow_s \ (p^0 \compose p^0) \connect p^{00}$ and $p^{11} \ \rulearrow_s \ (p^1 \compose p^1) \connect p^{11}$ are never applicable, as grafting a copy of $p^{00}$ (or $p^{11}$) onto its own join state introduces a non-deterministic transition. 
We could extend the definition of a PRS to also create NDFAs; however, this can introduce ambiguity and complicates the inference algorithm.

\bibliographystyle{splncs04}
\bibliography{PRS}

\end{document}